\def\UseBibLatex{1}
\def\input@path{{styles/}}
\providecommand{\BibLatexMode}[1]{}
\providecommand{\BibTexMode}[1]{}
\renewcommand{\BibLatexMode}[1]{#1}
\renewcommand{\BibTexMode}[1]{}
  \renewcommand{\BibLatexMode}[1]{}
  \renewcommand{\BibTexMode}[1]{#1}
\theoremstyle{plain}%
\newtheorem{theorem}{Theorem}[section]
\newtheorem{lemma}[theorem]{Lemma}
\theoremstyle{plain}%
\newtheorem*{remark:unnumbered}[theorem]{Remark}%
\newtheorem{defn}[theorem]{Definition}
\newtheorem{example}[theorem]{Example}
\newtheorem{assumption}[theorem]{Assumption}%
\theoremstyle{nonumberplain}%
\newtheorem{proof}{Proof:}%
\providecommand{\emphind}[1]{}%
\renewcommand{\emphind}[1]{\emph{#1}\index{#1}}
\definecolor{blue25emph}{rgb}{0, 0, 11}
\providecommand{\emphic}[2]{}
\renewcommand{\emphic}[2]{\textcolor{blue25emph}{%
      \textbf{\emph{#1}}}\index{#2}}
\providecommand{\emphi}[1]{}%
\renewcommand{\emphi}[1]{\emphic{#1}{#1}}
\definecolor{almostblack}{rgb}{0, 0, 0.3}
\providecommand{\emphw}[1]{}%
\renewcommand{\emphw}[1]{{\textcolor{almostblack}{\emph{#1}}}}%
\providecommand{\emphOnly}[1]{}%
\renewcommand{\emphOnly}[1]{\emph{\textcolor{blue25}{\textbf{#1}}}}
\newcommand{\myqedsymbol}{\rule{2mm}{2mm}}
\newcommand{\SarielThanks}[1]{%
   \thanks{%
      Department of Computer Science; %
      University of Illinois; %
      201 N. Goodwin Avenue; %
      Urbana, IL, 61801, USA; %
      \href{mailto:spam@illinois.edu}{sariel@illinois.edu}; %
      \url{http://sarielhp.org/}. %
   #1%
   }%
}
\newcommand{\HLink}[2]{\hyperref[#2]{#1~\ref*{#2}}}
\newcommand{\HLinkSuffix}[3]{\hyperref[#2]{#1\ref*{#2}{#3}}}
\newcommand{\figlab}[1]{\label{fig:#1}}
\newcommand{\figref}[1]{\HLink{Figure}{fig:#1}}
\providecommand{\deflab}[1]{\label{def:#1}}
\newcommand{\defref}[1]{\HLink{Definition}{def:#1}}
\newcommand{\defrefY}[2]{\hyperref[def:#2]{#1}}
\newcommand{\apndlab}[1]{\label{apnd:#1}}
\newcommand{\apndref}[1]{\HLink{Appendix}{apnd:#1}}
\newcommand{\seclab}[1]{\label{sec:#1}}
\newcommand{\secref}[1]{\HLink{Section}{sec:#1}}
\newcommand{\exmlab}[1]{\label{example:#1}}
\newcommand{\exmref}[1]{\HLink{Example}{example:#1}}
\newcommand{\lemlab}[1]{\label{lemma:#1}}
\newcommand{\lemref}[1]{\HLink{Lemma}{lemma:#1}}%
\providecommand{\eqlab}[1]{}%
\renewcommand{\eqlab}[1]{\label{equation:#1}}
\newcommand{\Eqref}[1]{\HLinkSuffix{Eq.~(}{equation:#1}{)}}
\providecommand{\remove}[1]{}%
\newcommand{\Set}[2]{\left\{ #1 \;\middle\vert\; #2 \right\}}
\newcommand{\pth}[1]{\mleft(#1\mright)}%
\newcommand{\ProbC}{{\mathbb{P}}}
\newcommand{\ExC}{{\mathbb{E}}}
\newcommand{\MedianC}{\mathbb{M}}
\newcommand{\ExCond}[2]{\ExC\!\left[%
       #1 \;\middle\vert\; #2 \right]}
\newcommand{\MedCond}[2]{\MedianC\!\left[%
       #1 \;\middle\vert\; #2 \right]}
\newcommand{\Prob}[1]{\ProbC\mleft[ #1 \mright]}
\newcommand{\ProbCond}[2]{\mathop{\ProbC}\!\left[%
       #1 \;\middle\vert\; #2 \right]}
\newcommand{\Ex}[1]{\ExC\mleft[ #1 \mright]}
\newcommand{\Median}[1]{\MedianC\mleft[ #1 \mright]}
\newcommand{\ceil}[1]{\mleft\lceil {#1} \mright\rceil}
\newcommand{\floor}[1]{\mleft\lfloor {#1} \mright\rfloor}
\renewcommand{\th}{th\xspace}
\renewcommand{\Re}{\mathbb{R}}%
\newcommand{\profileX}[1]{\ensuremath{\mathrm{profile}\pth{#1}}}
\newlist{compactenumA}{enumerate}{5}%
\setlist[compactenumA]{topsep=0pt,itemsep=-1ex,partopsep=1ex,parsep=1ex,%
   label=(\Alph*)}%
\newlist{compactenuma}{enumerate}{5}%
\setlist[compactenuma]{topsep=0pt,itemsep=-1ex,partopsep=1ex,parsep=1ex,%
   label=(\alph*)}%
\newlist{compactenumI}{enumerate}{5}%
\setlist[compactenumI]{topsep=0pt,itemsep=-1ex,partopsep=1ex,parsep=1ex,%
   label=(\Roman*)}%
\newlist{compactenumi}{enumerate}{5}%
\setlist[compactenumi]{topsep=0pt,itemsep=-1ex,partopsep=1ex,parsep=1ex,%
   label=(\roman*)}%
\newlist{compactitem}{itemize}{5}%
\setlist[compactitem]{topsep=0pt,itemsep=-1ex,partopsep=1ex,parsep=1ex,%
   label=\ensuremath{\bullet}}%
\newcommand{\xbeginlgox}{\begin{minipage}{1in}\begin{tabbing}
           \quad\=\qquad\=\qquad\=\qquad\=\qquad\=\qquad\=\qquad\=\kill}
        \newcommand{\xendlgox}{\end{tabbing}\end{minipage}}
\newenvironment{program}{
   \begin{minipage}{4.0in}
   \begin{tabbing}
       \ \ \ \ \= \ \ \ \= \ \ \ \ \= \ \ \ \ \= \ \ \ \ \=
      \ \ \ \ \= \ \ \ \ \= \ \ \ \ \= \ \ \ \ \=
      \ \ \ \ \= \ \ \ \ \= \ \ \ \ \= \ \ \ \ \= \kill
}{
   \end{tabbing}
   \end{minipage}
}
\numberwithin{figure}{section}%
\numberwithin{table}{section}%
\numberwithin{equation}{section}%
\DeclareFontFamily{U}{BOONDOX-calo}{\skewchar\font=45 }
\DeclareFontShape{U}{BOONDOX-calo}{m}{n}{
  <-> s*[1.05] BOONDOX-r-calo}{}
\DeclareFontShape{U}{BOONDOX-calo}{b}{n}{
  <-> s*[1.05] BOONDOX-b-calo}{}
\DeclareMathAlphabet{\mathcalb}{U}{BOONDOX-calo}{m}{n}
\SetMathAlphabet{\mathcalb}{bold}{U}{BOONDOX-calo}{b}{n}
\DeclareMathAlphabet{\mathbcalb}{U}{BOONDOX-calo}{b}{n}
\newcommand{\Opt}{\mathcal{O}}
\newcommand{\NN}{\mathbb{N}}%
\newcommand{\RExt}{\mathbb{R}^{+\infty}}%
\newcommand{\Distrib}{\mathcal{D}}
\newcommand{\alg}{\texttt{alg}\xspace}
\newcommand{\hprofile}{\mathcalb{h}}
\newcommand{\hprofileX}[1]{\hprofile\pth{#1}}%
\newcommand{\medianC}{\mathcalb{m}}%
\newcommand{\medianY}[2]{\medianC_{#1}( #2)}%
\newcommand{\medianX}[1]{\medianC(#1)}%
\newcommand{\pr}{\mathcalb{p}}
\newcommand{\etal}{\textit{et~al.}\xspace}
\newcommand{\TTL}{\textsf{TTL}\xspace}
\newcommand{\proxy}{\mathcalb{t}}%
\newcommand{\EWX}[1]{\mathcalb{w}_{#1}}%
\newcommand{\fTTLX}[1]{\mathcal{T}_{#1}}
\newcommand{\bitsX}[1]{\mathcalb{b}\pth{#1}}
\newcommand{\BIN}{\ensuremath{\textbf{BIN}}\xspace}
\newcommand{\Good}{\mathcal{G}}%
\newcommand{\invX}[1]{\alpha^{-1}\pth{#1}}%
\newcommand{\Seq}{\mathcal{T}}%
\begin{document}

\title{Quickly Avoiding a Random Catastrophe}

\author{%
   Stav Ashur%
   \and%
   Sariel Har-Peled\SarielThanks{Work on this paper was partially
      supported by NSF AF award CCF-2317241.  }}

\date{\today}

\maketitle

\begin{abstract}
    We study the problem of constructing simulations of a given
    randomized search algorithm \alg with expected running time
    $O( \Opt \log \Opt)$, where $\Opt$ is the optimal expected running
    time of any such simulation. Counterintuitively, these simulators
    can be dramatically faster than the original algorithm in getting
    \alg to perform a single successful run, and this is done without
    any knowledge about \alg, its running time distribution, etc.

    For example, consider an algorithm that randomly picks some
    integer $t$ according to some distribution over the integers, and
    runs for $t$ seconds. then with probability $1/2$ it stops, or
    else runs forever (i.e., a catastrophe). The simulators
    described here, for this case, all terminate in constant expected
    time, with exponentially decaying distribution on
    the running time of the simulation.

    Luby \etal \cite{lsz-oslva-93} studied this problem before -- and
    our main contribution is in offering several additional simulation
    strategies to the one they describe.  In particular, one of our
    (optimal) simulation strategies is strikingly simple: Randomly
    pick an integer $t>0$ with probability $c/t^2$ (with
    $c= 6/\pi^2$). Run the algorithm for $t$ seconds. If the run of
    \alg terminates before this threshold is met, the simulation
    succeeded and it exits. Otherwise, the simulator repeat the
    process till success.
\end{abstract}

\section{Introduction}

Consider a randomized algorithm \alg, with running time $X$ (in
seconds), where $X$ is a random variable. If the algorithm stops
naturally, it had completed a \emphi{successful} run.  Such randomized
algorithms that always succeed but their running time is a random
variable are known as \emphw{Las Vegas} algorithms.  Consider the
worst case scenario -- all we know is that the probability \alg is
successful is bounded a way from zero.  In practice, a failure might
be a case where the algorithm just takes longer than what is
acceptable (e.g., days instead of seconds).
\begin{example}
    \exmlab{1:100}%
    Consider an algorithm \alg with its running time $X$, such that
    $\Prob{X =1} = 0.01$ (i.e., \alg terminates after one second with
    probability $1/100$. Otherwise, it runs forever.
\end{example}
Our purpose is to simulate \alg such that the simulation has a small
(expected) running time, and has a ``lighter'' tail (i.e., one can
prove a concentration bound on the running time of the simulation).

\paragraph{The stop and restart model.}
We assume, that initializing the algorithm is ``free'', and we can
stop the algorithm's execution if the runtime exceeds a certain
prespecified threshold known as \emphi{\TTL} (\emphi{time to
   live}).  In the simplest settings, all we can do is restart the
algorithm and run it from scratch (using fresh randomization, thus the
new run is independent of earlier runs) once the \TTL is exceeded.

\begin{example}
    Consider the \alg described in \exmref{1:100}, with
    $\Prob{X=1}=1/100$, and otherwise it runs forever.  The optimal
    simulation strategy here is self evident -- run \alg for a \TTL of
    one second, if it terminates, viola, a successful run was
    found. Otherwise, terminate it and start a fresh run with the
    same \TTL. Clearly, the running time of this simulation has
    geometric distribution $\mathrm{Geom}(1/100)$, with expected
    running time of a $100$ seconds. A dramatic improvement over the
    original algorithm, which most likely would have never terminated.
\end{example}

Of course, in most cases the distribution $\Distrib$ of the running
time of $\alg$ is unknown. The challenge is thus to perform efficient
simulation of \alg without knowing $\Distrib$.  This problem was
studied by Luby \etal \cite{lsz-oslva-93}, who presented an elegant
simulation scheme that runs in (expected) $O( \Opt \log \Opt)$ time,
where $\Opt$ is the expected running time of the optimal simulation
scheme for the given algorithm. They also prove that no faster
simulation is possible.

\paragraph{Time scale.}

For concreteness, we use seconds as our ``atomic'' time units, but the
discussion implies interchangeability with other units of time.  In
particular, we consider one second as the minimal running time of
$\alg$.

\paragraph{Simulation.}

Here, we consider various simulation schemes of \alg, where the
purpose is to minimize the total running time till one gets a
successful run. We consider several different simulation models,
starting with a simple ``stop and restart'' model, and reaching more involved
schemes where the simulation creates several copies of the original
algorithm and run them in parallel, or dovetail their execution by
pausing and resuming their execution.

\paragraph{Relevant literature.}

The work of Luby \etal \cite{lsz-oslva-93} (mentioned above) studied
the problem we revisit here.  It was inspired in turn by the
work\footnote{Earlier, despite the years in the two references.}  of
Alt \etal \cite{agmkw-moras-96}, which provided a simulation strategy
to minimize the tail of the distribution of the simulator running
time.

Luby and Ertel \cite{le-oplva-94} studied the parallel version of Luby
\etal \cite{lsz-oslva-93}. They point out that computing the optimal
strategy in this case seems to be difficult, but they prove that under
certain assumptions a fixed threshold strategy is still
optimal. Furthermore, they show that running the strategy of Luby
\etal~on each CPU/thread yields theoretically optimal threshold.
Truchet \etal \cite{trc-ppsul-13} studied the running time in practice
and in theory of the algorithm if one simply run the algorithm in
parallel. They seems to be unaware of previous relevant work.

The simulation techniques of Luby \etal are used in solvers for
satisfiability \cite{gkss-ss-08}, and constraint programming
\cite{rbw-hcp-06}.  They are also used in more general search
problems, such as path planning \cite{maq-acopp-23}.

More widely, such simulations are related to stochastic resetting
\cite{ers-sra-20}, first passage under restart \cite{pr-fpr-17}, and
diffusion with resetting \cite{tpsrr-erdsr-20}. Over simplifying,
these can be described as random processes when one restart the
process till the system reaches a desired state.

\paragraph{The pause and resume model.}

An alternative, more flexible, model allows us to pause an algorithm's
execution if its running time exceeds a certain threshold, resume
its execution for a prespecified time, and repeat this process for as
many times as necessary.  Thus, one can run several (independent)
copies of the original algorithm in ``parallel'' by dovetailing the
iterations between them (for example, by round-robin scheduling the
different copies of the algorithm one by one). One can show that any
scheme for this model, can be simulated (with a loss of constant
factor).  by the stop and restart model.

\paragraph{Simulation is for free.}

We ignore the overhead of running the simulation, as this seems to be
negligible compared to the total running time of the algorithm being
simulated.

\paragraph{Our contribution.}

We first describe several known simulation strategies and their
performance. This is done to familiarize the reader with the settings,
and get some intuition on the problem at hand. These strategies
include (a $\star$ indicates optimal strategy):
\begin{compactenumI}
    \medskip%
    \item \TTL: If full information about the running time distribution
    of \alg is provided, then the optimal strategy is always to
    rerun the algorithm with the same \TTL. This was shown by Luby
    \etal and we rederive it  in \secref{TTL:best}, for the sake of
    completeness.

    In particular, the optimal threshold $\proxy$ induces a
    \emph{profile} $(k, \proxy) \in \Re^2$ that fully describes the
    behavior of \alg with respect to the \TTL simulation -- specifically,
    $k$ is the (expected) number of times one has to run \alg with the
    optimal threshold $\proxy$ before a successful run is
    achieved. Importantly, any simulation of \alg under any strategy
    must take (in expectation) $\Theta( k \proxy)$ time.  The concept
    of a profile and its properties are presented in \secref{profile},
    and it provides us with a clean way to think about simulation
    strategies for \alg.

    \medskip%
    \item \textsf{Exponential search}.  In \secref{exp:search} we
    explore maybe the most natural strategy -- exponential search for
    the optimal threshold \TTL. It turns out to be a terrible,
    horrible, no good strategy in the worst case (but we describe
    cases where it is decent, and it is maybe the first strategy to
    try out in practice).

    \medskip%
    \item \textsf{Parallel search}. The most lazy approach in practice
    is to run as many copies of \alg as there are threads in the
    system, and stop as soon as one of the copies succeeds.  if one
    runs $k$ copies of \alg, the expected running time of the
    simulation behaves like the $k$\th order statistic of the original
    distribution. While this ``bending'' of the running time of \alg
    does result in a speedup, it is easy to verify that it is not
    competitive in the worst case.

    \medskip%
    \item \textsf{Counter search} $(\star)$. We next present the
    optimal simulation strategy of Luby \etal \cite{lsz-oslva-93}, and
    prove it indeed takes $O( \Opt \log \Opt)$ expected time.  Its
    name is due to the \TTL{}s to use at every step being fully
    specified by tracking the bits that change when increasing the
    value of a binary counter by $1$. See \secref{counter} for
    details.

    \medskip%
    \item \textsf{$\zeta(2)$ search} $(\star)$. In \secref{r:zeta:2}
    we describe a random strategy that is strikingly simple: Randomly
    pick an integer $t>0$ with probability $c/t^2$ (with
    $c= 6/\pi^2$).  Use this as the \TTL for a run, and repeat the
    process till success.  With high probability this strategy is
    optimal, see \lemref{zeta:2:h:p}.

    \medskip%
    \item \textsf{Random search} $(\star)$.  In \secref{r:counter} we
    describe another random strategy that tries to simulate the
    ``distribution'' over the integers formed by the counter
    search. Informally, the algorithm generates a random binary string
    of length $2^k$ (with most significant bit $1$) with probability
    $1/2^{k}$. All strings of length $k$ have the same probability to
    be output. We then use the number encoded by the string as the
    \TTL threshold for the run. As usual, this repeats till a
    successful run is found.
    This strategy is also optimal, see     \lemref{r:counter}.

    \medskip%
    \item \textsf{Wide search} $(\star)$.  In \secref{wide:s}, we
    suggest another optimal simulation strategy. The basic idea is to
    run many copies of \alg, but in different ``speeds'', where the
    $i$\th copy is run at speed $1/i$. This can be achieved by
    suspending/resuming a run of \alg as needed.

    \medskip%
    \item \textsf{X+Cache search} $(\star)$.  Maybe the most
    interesting idea to come out of the wide search strategy is the
    notion of suspending a process. A natural strategy then is to run
    any of the other optimal strategies, but whenever a run exceeds
    its \TTL, instead of killing it, we suspend it. Next time a run is
    needed with a higher \TTL, we resume the suspended processes and
    run it for the remaining desired time. One needs to specify here
    the size of the cache -- i.e., how many suspended processes can
    one have (as too many suspended processes in practice seems to
    lead to a degradation in performance, as happens with the wide
    search). If the cache is full a process is killed as in the
    original scheme.  This is described in detail in
    \secref{cache}. This strategy is optimal if $X$ is one of the
    above optimal strategies.

\end{compactenumI}

\remove{%
\paragraph{Paper organization.}

In \secref{basic}, we establish the basic results on this problem, and
review known results.  In \secref{TTL:best}, we show that \TTL
strategy is an optimal under full information -- this part is from
Luby \etal and is provided for the sake of completeness.  In
\secref{profile}, we introduce the notion of profile of an algorithm.
If \alg has profile $(k, \Delta)$, then somewhat informally, \alg
needs to be run $k$ times, with a \TTL of $\Delta$ to get a successful
run with constant probability (the notion of profile might be knew).
In \secref{simulations} we describe and analyze two strategies: (i)
exponential search (which is natural but suboptimal), and (ii) counter
search (due to Luby \etal).

In \secref{new} we describe our new simulation strategies. We start at
\secref{random} with random search, where the \TTL{}s being used are
randomly sampled, and we offer two distributions: (i) the $\zeta(2)$
distribution in \secref{r:zeta:2}, and (ii) binary strings
distribution \secref{r:counter} (which simulates the counter search).
In \secref{wide:s}, we describe the wide search -- a strategy that
pauses and resumes copies of \alg instead of killing them. At least
theoretically it should lead to a (small) constant speedup over the
other strategies, as it is more efficient.  Finally, in
\secref{cache}, we point out that one can combine any of the previous
strategies (except the wide search) with a simple caching strategy
where some of the short term processes are being killed, while longer
ones are suspended and resumed using a ``cache'' of suspended
processes.  }

\section{Basic simulation schemes}%
\seclab{basic}

\paragraph*{Settings and basic definitions.}

Let $\RExt$ denote the set of all real non-negative numbers including
$+\infty$.  Let \alg denote the given randomized algorithm --- we
assume its running time is distributed according to an (unknown)
distribution $\Distrib$ over $\RExt$. Given a threshold $\alpha$, and
a random variable $X \sim \Distrib$, the \emphi{$\alpha$-median} of
$\Distrib$, denoted by
$\medianY{\alpha}{X}= \medianY{\alpha}{\Distrib}$, is the minimum
(formally, infimum) of $t$ such that
\begin{math}
    \Prob{X < t} \leq \alpha \text{ and } \Prob{X > t} \leq 1-\alpha.
\end{math}
The \emphi{median} of $\Distrib$, denoted by $\medianX{\Distrib}$, is
the $1/2$-median of $\Distrib$.

\begin{defn}
    \emphi{\TTL} simulation. We are given a prespecified threshold
    $\Delta$, \alg is run with $\Delta$ as \TTL threshold. Once the
    running time of this execution exceeds the provided \TTL $\Delta$
    without succeeding, this copy of \alg is killed and a new copy is
    started using the same \TTL $\Delta$. This continues until a copy
    of \alg stops naturally before its execution time exceeds its
    \TTL.
\end{defn}

\paragraph{Specifying a simulation strategy.}
The basic idea is to generate a sequence (potentially implicit and
infinite) of \TTL{}s $\Seq = t_1, t_2, \ldots$, and run \alg using
these \TTL{}s one after the other, till a run ends successfully, or
its time exceeds its \TTL, and then it is killed and a new run
starts. The question is, of course, how to generate a sequence that
minimizes the expected time till success is encountered. Thus, the
strategy of the \TTL simulation with threshold $\Delta$, is the
sequence $\Seq = \Delta, \Delta, \ldots$.

\subsection{The optimal strategy is always \TTL simulation}
\seclab{TTL:best}

Assume we are explicitly given the distribution of the running time of
the algorithm \alg. For concreteness (and simplicity), assume we are
given a discrete distribution --- with times
$s_1 \leq s_2\leq \cdots\leq s_n$, and corresponding probabilities
$p_1, \ldots, p_n$, where $p_i$ is the probability the algorithm stops
after exactly $s_i$ seconds.

Let $F_i = \sum_{j=1}^i p_j$. Clearly, an optimal strategy would
always set the time threshold to be one of the values of
$S = \{ s_1, s_2, \ldots, s_n \}$. If the algorithm selects threshold
$s_i$ then if this iteration was successful the expected time
this final iteration took is
\begin{math}
    \mu_i = \tfrac{1}{F_i}\sum_{\ell=1}^i p_\ell s_\ell.
\end{math}
For the strategy that always picks $t_i$ as a threshold, its expected
number of iterations till success is $1/F_i$, where the last one is a
success iteration for which in expectation we have to only pay
$\mu_i$ (instead of $s_i$).

Let $\Opt$ denotes the expected running time till success under the
optimal strategy. As before, if we decide the first threshold should
be $s_i$, then we have
\begin{equation*}
    \Opt = F_i \mu_i + (1-F_i) (s_i +  \Opt)
    \implies
    F_i \Opt = F_i \mu_i + (1-F_i) s_i
    \implies
    \Opt =  \mu_i + \frac{1-F_i}{F_i} s_i .
\end{equation*}

\begin{lemma}[Full information settings \cite{lsz-oslva-93}]
    \lemlab{full:k}%
    Let \alg be a randomized algorithm with running time being a
    random variable $X \in \RExt$. The expected running time of \alg
    using \TTL simulation with threshold $t$ is
    \begin{equation*}
        f(t)
        =%
        \ExCond{X}{X \leq t} + \frac{\Prob{X > t }}{\Prob{X\leq t}} t
        \leq%
        R(t)
        \qquad
        \text{where }\qquad%
        R(t) =
        \frac{t}{\Prob{X\leq t}}.
    \end{equation*}
    The \emphi{optimal threshold} of \alg is
    \begin{math}
        \Delta = \fTTLX{\alg} = \arg\min_{t \geq 0} f(t).
    \end{math}
    The optimal simulation of \alg, minimizing the expected running
    time, is the \TTL simulation of \alg using threshold $\Delta$, and
    its expected running time is $\Opt = f(\Delta)$.
\end{lemma}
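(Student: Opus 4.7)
The proof has two distinct pieces: (i) derive the closed form $f(t)$ for the expected running time under the \TTL simulation with a fixed threshold $t$, and (ii) show that no other strategy can beat the \TTL strategy with the optimal threshold $\Delta$.

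For (i), I will view the \TTL simulation as a sequence of independent trials repeated until success. Letting $p = \Prob{X \leq t}$, each trial succeeds with probability $p$ and has conditional expected duration $\ExCond{X}{X \leq t}$, or else runs for exactly time $t$ and is killed. The number of failed trials before the (geometric) successful trial has expectation $(1-p)/p$, each contributing exactly $t$ to the total time, while the final successful trial contributes $\ExCond{X}{X \leq t}$ in expectation. Summing by linearity of expectation yields $f(t) = \ExCond{X}{X \leq t} + \frac{1-p}{p}\,t$, matching the stated formula. The bound $f(t) \leq R(t)$ is immediate from the inequality $\ExCond{X}{X \leq t} \leq t$.

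For (ii), I will use a dynamic-programming / Bellman argument. A general strategy is a (possibly adaptive) sequence of thresholds $t_1, t_2, \ldots$; since each run uses independent randomness and a failure at threshold $t_i$ conveys no information beyond $X > t_i$ that is useful for the next fresh copy of \alg, after any failure the remaining subproblem is identical to the original. Letting $\Opt$ denote the infimum of expected total time over all strategies and setting $g(s) = \ExCond{X}{X \leq s}\Prob{X \leq s} + s\,\Prob{X > s}$ (the expected duration of one trial at threshold $s$), this gives the recursion
\[
    \Opt \;=\; \inf_{s \geq 0}\,\bigl[\, g(s) + \Prob{X > s}\cdot \Opt \,\bigr].
\]
Solving for $\Opt$ at the minimizer shows $\Opt = g(s)/\Prob{X \leq s} = f(s)$, so $\Opt = \inf_{s} f(s) = f(\Delta)$. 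Because the minimizer is stationary, the optimal strategy uses $\Delta$ at every step, which is exactly the \TTL simulation with threshold $\Delta$. This is precisely the discrete derivation carried out above the lemma, reformulated for a general $X \in \RExt$.

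The main obstacle is the technical verification that the infimum is attained, i.e., that $\Delta = \arg\min_{t} f(t)$ actually exists. This requires restricting attention to $t$ with $\Prob{X \leq t}$ bounded away from $0$ (below this, $f(t) = +\infty$ trivially is not the minimum), and a lower-semicontinuity / coercivity argument to ensure that $f$ attains its minimum on $\RExt$. If the infimum is not attained, the statement should be read with $f(\Delta)$ replaced by $\inf_t f(t)$ and the optimal strategy approximated to within any $\varepsilon > 0$ by \TTL with threshold close to $\Delta$; the rest of the argument is unchanged.
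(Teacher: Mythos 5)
Your proposal is correct and follows essentially the same route as the paper: the closed form for $f(t)$ via the geometric number of failed runs plus the conditional expectation of the successful one, and optimality of a fixed threshold via memorylessness (your Bellman recursion $\Opt = \inf_s[\,g(s) + \Prob{X>s}\Opt\,]$ is just a more formal phrasing of the paper's derivation $\Opt = F_i\mu_i + (1-F_i)(s_i+\Opt)$ and its memorylessness remark). Your added caveat about whether the infimum is attained is a reasonable refinement that the paper glosses over.
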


\begin{proof}
    The above readily implies the stated bounds. What might be less
    clear is why applying the same threshold repeatedly
    is the optimal strategy.  To this end, observe that the
    process is memoryless -- the failure of the first round (which is
    independent of later rounds) does not change the optimal strategy
    to be used in the remaining rounds\footnote{This is where we use
       the full information assumption -- we gain no new information
       from knowing that the first run had failed.}. Thus,
    the threshold to be used in the second round should also be used
    in the first round, and vice versa. Thus, all rounds should use
    the same threshold.
\end{proof}

\begin{example}[Useful]
    \exmlab{zeta:2}%
    The Riemann zeta unction is $\zeta(s) = \sum_{i=1}^\infty 1/i^s$.
    In particular, $\zeta(2) = \pi^2/6$ (i.e.,
    \href{https://en.wikipedia.org/wiki/Basel_problem}{Basel
       problem}).  In particular, consider the scenario that the
    running time of \alg is a random variable $X \sim \zeta(2)$. That
    is, $\Prob{X=i} = c/i^2$, where $c = 6/\pi^2$.  This is a special
    case of the
    \href{https://en.wikipedia.org/wiki/Zeta_distribution}{Zeta
       distribution}, specifically \emphi{$\zeta(2)$
       distribution}. The expected running time of \alg is unbounded
    as $\Ex{X} = \sum_{i=1}^\infty i c/i^2 = +\infty$. It is not hard
    to verify that the minimum of $f$ is realized at $\Delta=1$, and
    the expected running time of this \TTL simulation is $O(1)$.
\end{example}

\begin{example}[Not useful]
    Consider the scenario that the running time is a random variable
    $X$ that is uniformly distributed in an interval
    $ I = [\alpha,\beta]$, for $\beta \geq \alpha \geq 0$. It is not
    hard, but tedious, to verify that the optimal threshold is
    $\beta$, which is the same as running the original algorithm till
    termination.
\end{example}

\subsection{Algorithms and their profile}
\seclab{profile}

If $\alg$ has optimal threshold $\Delta = \fTTLX{\alg}$ , and
$\beta =\Prob{ X\leq \Delta}$. The optimal strategy would rerun $\alg$
(in expectation) $1/\beta$ times, and overall would have expected
running time bounded by $\Delta / \beta$. Since the functions $f(t)$
and $R(t)$ of \lemref{full:k} are not the same, we need to be more
careful.
\begin{defn}
    \deflab{profile}%
    For the algorithm \alg, let $X$ be its running time.  The
    \emphw{proxy running time} of \alg with threshold $\alpha$ is the
    quantity $R(\alpha) = \alpha/ \Prob{X \leq \alpha}$. Let
    $\proxy  = \arg \min_{\alpha > 0 } R(\alpha)$ be the
    optimal choice for the proxy running time.
    The \emphi{profile} of $\alg$, is the point
    \begin{equation*}
        \profileX{\alg}
        =
        (1/\pr, \proxy), \qquad\text{where}\qquad
        \pr = \Prob{X \leq \proxy}.
    \end{equation*}
    The total expected \emphi{work} associated with $\alg$ is
    $\EWX{\alg} = R(\proxy) = \proxy / \pr$.
\end{defn}

We show below that the optimal simulation strategy, up a constant, for
\alg with profile $(1/\pr, \proxy)$ is to run the \TTL simulation with
threshold $\proxy$. In expectation, this requires running \alg $1/\pr$
times, each for $\proxy$ time. Thus, the work associated with $\alg$
is the area of the axis aligned rectangle with corners at the origin
and $\profileX{\alg}$. Thus, all the algorithms with the same amount
of work $\alpha$ have their profile points on the hyperbola
$\hprofile(\alpha) \equiv( y=\alpha/x)$.

If \alg has
profile $(1/\pr, \Delta)$ then the expected running time
of the simulation is determined (roughly) by the minimum $k$, such that
$t_1, \ldots, t_k$ contains at least $1/\pr$ numbers at least as large
as $\Delta$. In particular, the expected simulation running time is
(roughly) $O(\sum_{i=1}^k t_i)$.

Intuitively (but somewhat incorrectly) $f(x)$ and $R(x)$ are roughly
the same quantity, and one can thus consider $R(x)$ to be the expected
running time of the \TTL simulation with threshold $x$. The following
lemma shows that this is indeed the case for the minimums of the two
functions, which is what we care about. The proof of this lemma is a
bit of a sideshow, and is delegated to the appendix.

\begin{lemma}[Proof in \apndref{proof:equiv}]
    \lemlab{equiv}%
    If \alg has profile
    $(1/\pr, \proxy)$, then the optimal simulation of \alg takes in
    expectation $\Theta(R(\proxy)) = \Theta( \proxy/\pr)$ time.
\end{lemma}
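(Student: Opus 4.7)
The plan is to combine \lemref{full:k} with two-sided estimates on $\Opt = f(\Delta) = \min_t f(t)$ and show directly that $R(\proxy)/2 \leq \Opt \leq R(\proxy)$. The upper bound is immediate: by \lemref{full:k}, $f(t) \leq R(t)$ pointwise, so $\Opt \leq f(\proxy) \leq R(\proxy)$.

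For the lower bound, I would first recast $f$ in a form amenable to integral estimates. Multiplying the defining formula of $f$ through by $\Prob{X \leq t}$ collapses it to $f(t) = \Ex{\min(X,t)}/\Prob{X\leq t}$, and the truncated mean admits the tail representation $\Ex{\min(X,t)} = \int_0^t \Prob{X > s}\diffX{s}$. The crucial observation is the equivalence: the optimality of $\proxy$ for $R$ is the same statement as the Markov-style CDF bound $\Prob{X\leq t} \leq t/R(\proxy)$ for every $t>0$ (just rearrange $R(t) \geq R(\proxy)$). Plugging that bound into the tail integral yields $\Ex{\min(X,t)} \geq t - t^2/(2R(\proxy))$; in particular, $\Ex{\min(X,t)} \geq t/2$ whenever $t \leq R(\proxy)$, and at $t = R(\proxy)$ this gives $\Ex{\min(X,R(\proxy))} \geq R(\proxy)/2$.

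A short case split then finishes the argument. For $t \leq R(\proxy)$, the estimate above gives $f(t) \geq R(t)/2 \geq R(\proxy)/2$. For $t > R(\proxy)$, monotonicity of $t \mapsto \Ex{\min(X,t)}$ (since $\min(X,t)$ is itself monotone in $t$ for each fixed $X$) yields $\Ex{\min(X,t)} \geq R(\proxy)/2$, and dividing by $\Prob{X\leq t} \leq 1$ still gives $f(t) \geq R(\proxy)/2$. Hence $\Opt = \min_t f(t) \geq R(\proxy)/2$, and combined with the upper bound we obtain $\Opt = \Theta(R(\proxy)) = \Theta(\proxy/\pr)$. The only conceptually nontrivial step is spotting that ``$\proxy$ minimizes $R$'' is equivalent to the CDF being dominated by the line through the origin with slope $1/R(\proxy)$; once that reformulation is in hand, the bound on the truncated mean and the case split are both routine.
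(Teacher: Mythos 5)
Your proof is correct, and it takes a genuinely different route from the paper's. The paper lower-bounds $\Opt = f(\Delta)$ only at the minimizer $\Delta$ of $f$, via a case split on $\beta = \Prob{X \leq \Delta}$: when $\beta \leq 7/8$ the restart-overhead term $\frac{1-\beta}{\beta}\Delta$ already dominates $\tfrac18 R(\proxy)$, and when $\beta > 7/8$ it routes through the conditional median $\medianC = \MedCond{X}{X\leq \Delta}$, using Markov's inequality to get $\medianC \leq 2\,\ExCond{X}{X\leq\Delta} \leq 2 f(\Delta)$ and the fact that $\Prob{X\leq\medianC}\geq \beta/2$ to conclude $R(\proxy) \leq R(\medianC) \leq \tfrac{16}{7}\medianC \leq 5 f(\Delta)$. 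You instead prove the uniform pointwise bound $f(t) \geq R(\proxy)/2$ for \emph{every} $t$, by (i) rewriting $f(t) = \Ex{\min(X,t)}/\Prob{X\leq t}$, (ii) using the tail formula $\Ex{\min(X,t)} = \int_0^t \Prob{X>s}\,\diffX{s}$, and (iii) observing that minimality of $R$ at $\proxy$ is exactly the linear CDF bound $\Prob{X\leq s} \leq s/R(\proxy)$, which integrates to $\Ex{\min(X,t)} \geq t - t^2/(2R(\proxy)) \geq t/2$ for $t \leq R(\proxy)$; the case $t > R(\proxy)$ follows from monotonicity of the truncated mean. All three steps check out (the identity in (i) follows by multiplying the definition of $f$ through by $\Prob{X\leq t}$, and the atoms of $X$ do not affect the integral in (ii)). Your argument is arguably cleaner, yields the sharper constant $R(\proxy)/2 \leq \Opt \leq R(\proxy)$ versus the paper's factor $8$, and gives more: a lower bound on the cost of \emph{every} fixed threshold, not just the optimal one. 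The paper's version, by contrast, needs no integral representation and stays entirely within the discrete/elementary toolkit it uses elsewhere.
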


\subsection{Simulations using \TTL sequences}
\seclab{simulations}

We now get back to task of designing good simulation strategies.  A
natural way to figure out all the profiles of algorithms for which the
prefix $t_1, \ldots, t_k$ (of a sequence $\Seq$) suffices in
expectation, i.e., \alg should have a successful run using only these
\TTL{}s, is to sort them in decreasing order and interpret the
resulting numbers $t_1' \geq t'_2, \ldots, t_k'$ as a bar graph (each
bar having width $1$). All the profiles under this graph, are
``satisfied'' in expectation using this prefix. We refer to the
function formed by the top of this histogram, as the $k$-\emphi{front}
of $\Seq$.  see \figref{work}.

\begin{figure}[h]
    \phantom{}\hfill%
    \includegraphics[width=0.45\linewidth]{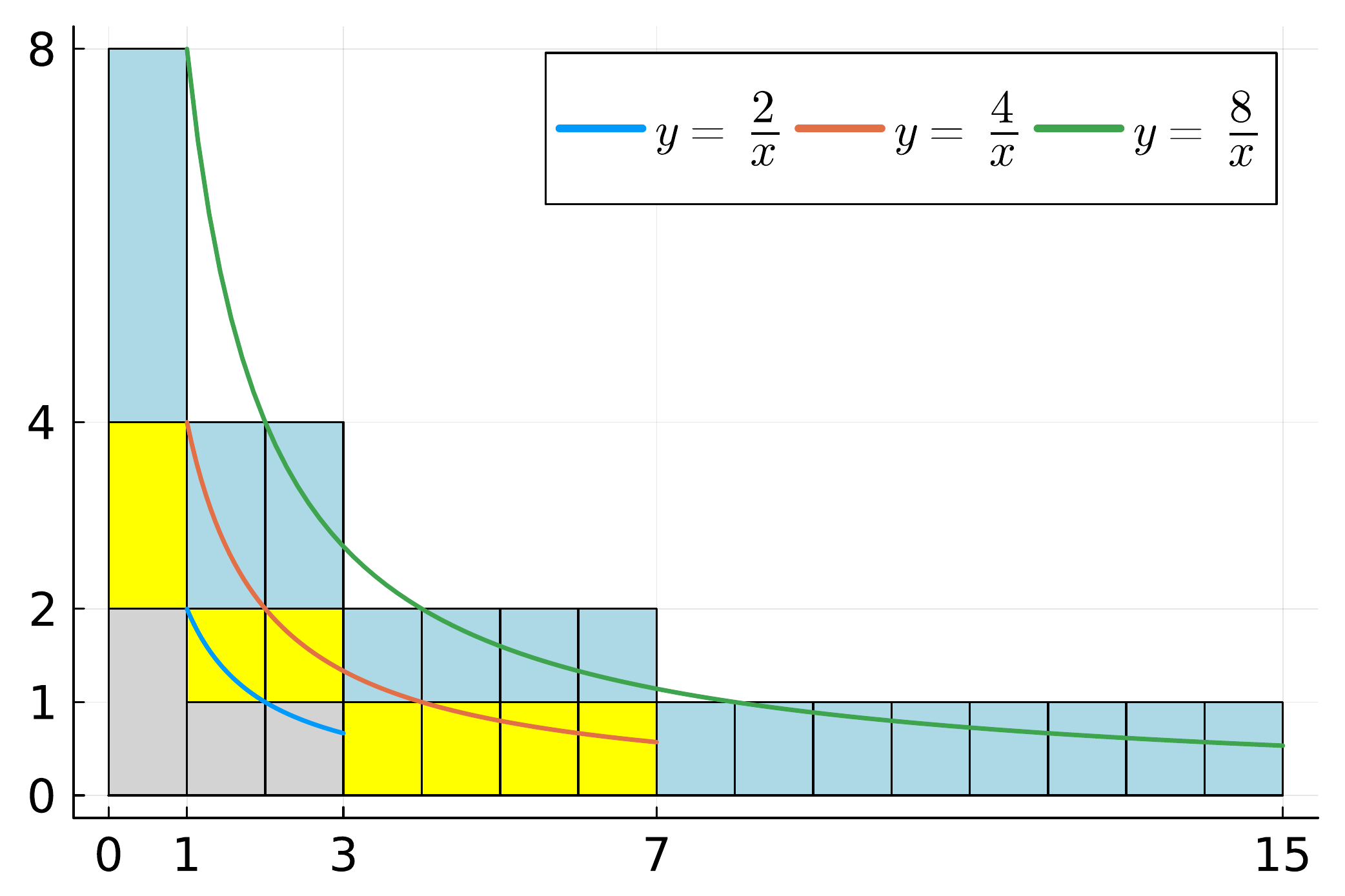}
    \hfill%
    \includegraphics[width=0.45\linewidth]{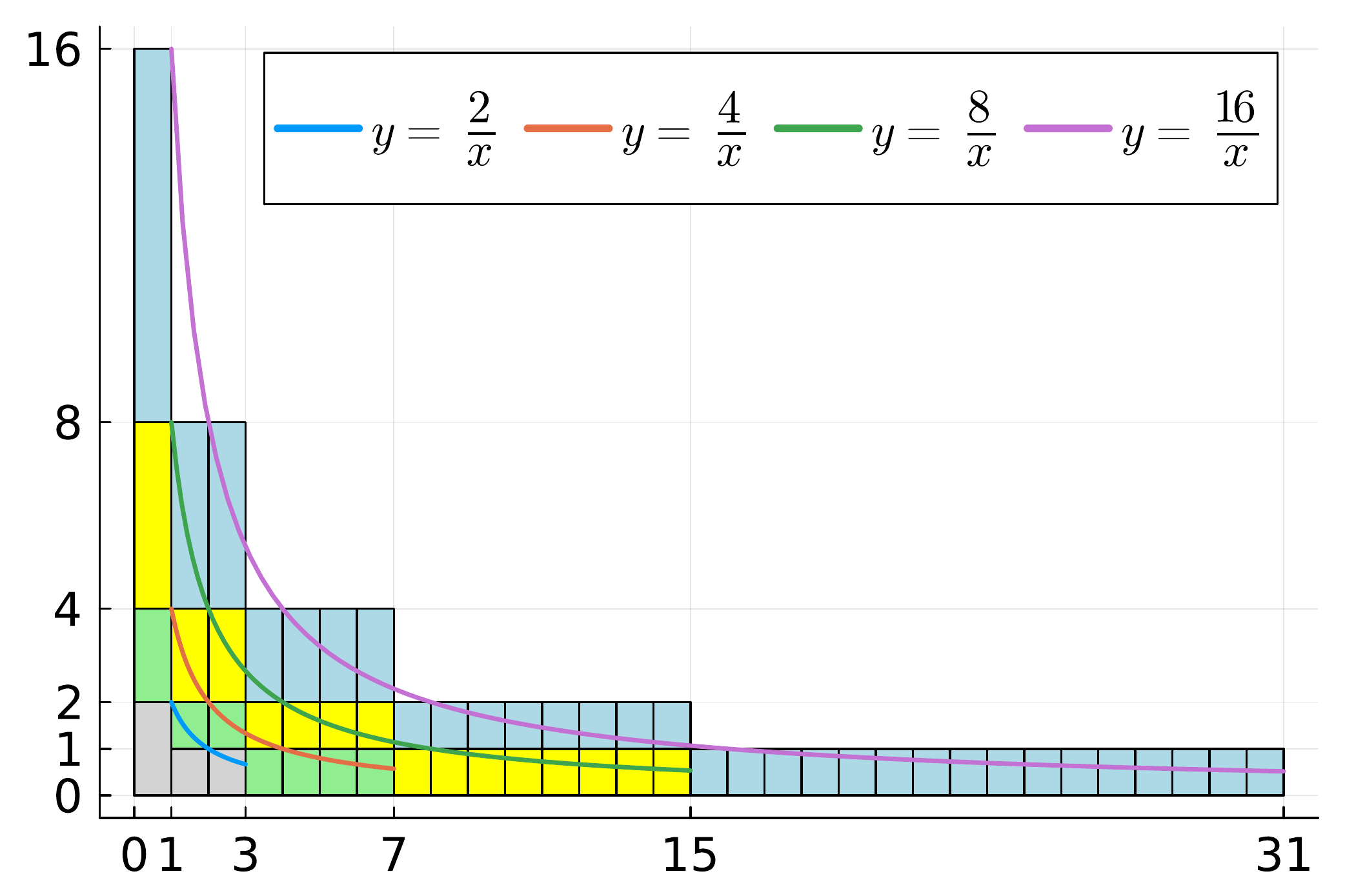}
    \hfill%
    \phantom{}%
    \caption{The work hyperbola for $8$ (left) and $16$ (right), and
       their associated sequences $S_3$ and $S_4$.}
    \figlab{work}
\end{figure}

A key insight is that even if we knew that the profile of \alg lies on
the hyperbola $\hprofile(\alpha) \equiv( y=\alpha/x)$ (as mentioned
above, all these algorithms have the same expected work $\alpha$), we
do not know where it lies on this hyperbola. Thus, an optimal sequence
front should sweep over the \emphi{$\alpha$-work hyperbola}
\begin{equation*}
    \hprofileX{\alpha}
    =%
    \Set{ \Bigl. (1/\pr, t) }{ t/\pr = \alpha \text{ and } 1 \leq 1/\pr
       \leq \alpha }
\end{equation*}
more or less in the same time. The question of how to come up with
such a strategy is not obvious, and we start with the natural, but
wrong, solution.

\subsubsection{Exponential search: Simple but not optimal}
\seclab{exp:search}

Consider the situation where the distribution $\Distrib$ of the
running time of \alg is unknown.  One can try to estimate the optimal
threshold $\Delta$, but the challenge is to simulate \alg directly,
without ``wasting'' time on learning $\Distrib$ first.  Given a
parameters $s,\delta \geq 1$, the \emphw{$(s,\delta)$-exponential
   simulation}, uses the \TTL sequence $\Seq = r_1, r_2, \ldots$,
where $r_i = (1+\delta)^{i-1} s$ for all $i \geq 1$.  The proof of the
following is straightforward and is delegated to the appendix.

\begin{lemma}[Proof in \apndref{proof:exp:search}]
    \lemlab{exp:search}%
    (A) Let $\medianC = \Median{\Distrib}$ be the median running time
    of $\alg$.  The $(s,\delta)$-exponential simulation, assuming
    $s \leq \medianC$ and $\delta \leq 1/2$, has expected running time
    $O\bigl(\medianC(1 + 1 / \delta) \bigr)$.

    (B) Exponential search can be arbitrarily slower than the optimal
    simulation.
\end{lemma}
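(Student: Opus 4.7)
The plan is to split the analysis at the first iteration whose threshold reaches the median. Let $i^*$ be the smallest index with $r_{i^*}\geq\medianC$; since $s\leq\medianC$ this is well-defined, and $r_{i^*-1}<\medianC$ gives $r_{i^*}\leq(1+\delta)\medianC$. By the definition of the median, each iteration from $i^*$ onwards succeeds with probability at least $1/2$, so the probability of reaching iteration $i^*+j$ is at most $(1/2)^j$. I write the total expected runtime as $\sum_i r_i \Prob{\text{reach iteration }i}$ and bound the two phases separately: for $i<i^*$ I bound the reach probability by $1$, yielding a geometric sum $\sum_{i=1}^{i^*-1} r_i \leq r_{i^*}(1+\delta)/\delta = O(\medianC/\delta)$; for $i\geq i^*$ the contribution is at most $r_{i^*}\sum_{j\geq 0}((1+\delta)/2)^j$, which is $O(r_{i^*})=O(\medianC)$ because $\delta\leq 1/2$ keeps the common ratio below $3/4$. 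Adding the two phases gives the claimed $O(\medianC(1+1/\delta))$. This part is mechanical once the decomposition at $i^*$ is set up.

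\textbf{Part (B).} The plan is to exhibit a parametric family of distributions on which exponential search is arbitrarily slower than the optimal simulation. Fix a small constant $p$ (say $p=1/100$) and take $\Prob{X=1}=p$ and $\Prob{X=N}=1-p$ for a growing parameter $N$. One checks that $R(1)=1/p$ while $R(t)\geq N$ for every $t\geq N$, so the optimal threshold is $1$ and $\Opt=O(1)$. However the median is $N$, so exponential search (with $s=1$ and any fixed $\delta$) must climb its ladder for $i^*=\Theta(\log N)$ steps before reaching a threshold $r_{i^*}\geq N$. Every earlier iteration succeeds only with the small probability $p$, so the probability of reaching iteration $i^*$ is $(1-p)^{i^*-1}=N^{-c}$ for a small constant $c=c(p,\delta)>0$; conditional on reaching it, the successful run costs $\Omega(N)$ in expectation. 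Multiplying gives an overall expected runtime of $\Omega(N^{1-c})$, which diverges as $N\to\infty$ while $\Opt$ remains constant, so the ratio is unbounded.

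The main (and essentially only) subtlety is in (B): one has to choose a distribution in which the optimal threshold is much smaller than the median, so that the exponential ladder ``wastes'' substantial time before its rungs are useful; once that gap is present, the slow decay $(1-p)^{\Theta(\log N)}$ cannot counteract the linear growth of $r_{i^*}$. Part (A) is then routine geometric-series bookkeeping after the $i^*$-split.
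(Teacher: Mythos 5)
Part (A) of your proposal is the paper's argument essentially verbatim: split at the first index $i^*$ with $r_{i^*}\geq \medianC$, bound the pre-median phase by the geometric sum $O(\medianC/\delta)$, and bound the post-median phase by a geometric series with common ratio $(1+\delta)/2\leq 3/4$. Nothing to add there.

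Part (B) uses a different, and also valid, construction. The paper takes $\Prob{X=1}=\alpha$ and $X=\infty$ otherwise, with $\alpha\leq\min(\delta/2,1/4)$, so that $\sum_i\bigl((1-\alpha)(1+\delta)\bigr)^{i-1}$ diverges: the exponential simulation has \emph{infinite} expected runtime while the optimal is $1/\alpha<\infty$. Your two-atom distribution $\Prob{X=1}=p$, $\Prob{X=N}=1-p$ instead produces a finite but unboundedly large ratio $\Omega(N^{1-c})$; this has the mild advantage of working with bounded-support distributions. However, there is one quantitative point you glossed over: your bound only diverges if $c<1$, and the exponent is $c=\ln\bigl(1/(1-p)\bigr)/\ln(1+\delta)$, which for your fixed choice $p=1/100$ exceeds $1$ once $\delta$ is small (roughly $\delta<p$), in which case $N^{1-c}\to 0$ and the argument says nothing. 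So for ``any fixed $\delta$'' you must couple $p$ to $\delta$, e.g.\ take $p\leq \ln(1+\delta)/2$ so that $c\leq 1/2$ and the lower bound is $\Omega(\sqrt{N})$ against $\Opt=O(1/p)$ --- exactly the same coupling the paper performs via $\alpha\leq\delta/2$. With that one-line adjustment your Part (B) is complete.
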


\subsubsection{Counter search}
\seclab{counter}

Here we describe a strategy due to Luby \etal \cite{lsz-oslva-93} --
our exposition is different and is included here for the sake
completeness.

In the following, let $\{x\}^y$ be the sequence made out of $y$ copies
of $x$.  Fix a value $2^i$. We are trying to ``approximate'' the
hyperbola $\hprofile(2^i)$ with a sequence whose front roughly
coincides with this hyperbola.  One way to do so, is to consider a
sequence $S_i$ containing the numbers
\begin{equation*}
    S_i \qquad\equiv\qquad \{2^i\}^{2^0}, \{2^{i-1}\}^{2^1}\ldots \{2^{0}\}^{2^i}
    =%
    \{2^i\}^{1}, \{2^{i-1}\}^{2},  \{2^{i-2}\}^{4},
    \ldots ,
    \{2\}^{2^{i-1}},
    \{1\}^{2^i}
\end{equation*}
(not necessarily in this order!).  See \figref{work}.

The intuition for this choice is the following -- we might as well
round the optimal threshold of the algorithm to the closest (bigger)
power of $2$, and let $\nabla$ denote this number. Unfortunately, even
if we knew that the work of the optimal simulation is $\alpha = 2^i$,
we do not know which value of $\nabla$ to use, so we try all possible
choices for this value. It is not hard to verify that $S_i$ dominates
the work hyperbola $\hprofile{2^i}$ (for all integral points).

What is not clear is how to generate an infinite sequence that its
prefix is $S_i$ if one reads the first $|S_i|$ elements of this
sequence, for all $i$. As Luby \etal \cite{lsz-oslva-93} suggest,
ignoring the internal ordering of $S_i$, we have that
$S_{i+1} = S_i, S_i, 2^{i+1}$, where $S_0 = 1$. Indeed, every number
in $S_i$ appears twice as many times in $S_{i+1}$ except for the largest value,
which is unique. We have $S_1 = 1,1,2$, $S_2 = 1,1,2,1,1,2,4$, and so
on.

\newcommand{\TX}[1]{\texttt{#1}}

Algorithmically, there is a neat way to generate this \emphi{counter
   simulation} \TTL sequence.  Let $c$ be an integer counter
initialized to $0$. To generate the next elements in the sequence, at
any point in time, we increase the value of $c$ by one, and add all
the numbers $2^i$, for $i=0,1,2,3,\ldots$, that divide $c$ into the
sequence.  Interpreting $c$ as being represented in base $2$, this is
equivalent to outputting all the bits have changed by the increment.
Thus, the sequence $T$ is the following.
\begin{equation*}
    \begin{array}{|l||c|c|c|c|c|c}
      \hline
      T\equiv
      &
        1,
      &
        1,2,
      &
        1,
      &
        1,2,4,
      &
        1
      &
        \cdots
      \\
      \hline
      \hline
      c
      &
        1
      &
        2
      &
        3
      &
        4
      &
        5
      &
        \cdots
      \\
      \hline
      c \text{ in base }2
      &
        00\underline{1}
      &
        0\underline{10}
      &
        01\underline{1}
      &
        \underline{100}
      &
        10\underline{1}\Bigr.
      &
        \cdots
      \\
      \hline
    \end{array}
\end{equation*}

\remove{%
\begin{equation*}
    T \equiv
    \underbrace{1}_{c=1:~\TX{00\underline{1}}},%
    \quad%
    \underbrace{1,2}_{c=2:~\TX{0\underline{10}}},%
    \quad%
    \underbrace{1}_{c=3:~\TX{01\underline{1}}},%
    \quad%
    \underbrace{1,2,4}_{c=4:~\TX{\underline{100}}},
    \quad
    \underbrace{1}_{c=5:~\TX{10\underline{1}}}, \ldots
\end{equation*}
}

\begin{lemma}[\cite{lsz-oslva-93}]
    Running the stop and restart simulation on an algorithm \alg
    using the sequence $T$ above results in expected running time
    $O( \Opt \log \Opt)$, where $\Opt$ is the expected running time of
    the optimal simulation for \alg.
\end{lemma}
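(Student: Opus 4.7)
The plan is to invoke \lemref{equiv} so that \alg has profile $(1/\pr, \proxy)$ with $\Opt = \Theta(\proxy/\pr)$, and to choose $k$ to be the smallest integer with $2^k \geq 2\proxy/\pr$, so that $2^k = \Theta(\Opt)$. By the recursion $S_{i+1} = S_i, S_i, 2^{i+1}$ (with $S_0 = 1$), the sum of entries in $S_k$ is $C_k = (k+1) 2^k = O(\Opt \log \Opt)$, and the prefix of $T$ of length $|S_k| = 2^{k+1}-1$ equals $S_k$ (the internal ordering is irrelevant, since the induced runs of \alg are independent).

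First, I would argue that a single ``pass'' of $S_k$ produces a success with constant probability. Sorted in decreasing order, $S_k$ contains $2^i$ copies of $2^{k-i}$ for each $i \in \{0, 1, \ldots, k\}$, so its $j$-th largest entry is at least $2^k/j$; this is the key geometric fact that the front of $S_k$ dominates the hyperbola $\hprofileX{2^k}$. For $j = \ceil{1/\pr} \leq 2/\pr$, this lower bound gives an entry of value at least $2^k \pr / 2 \geq \proxy$ by the choice of $k$. Hence $S_k$ triggers at least $\ceil{1/\pr}$ independent runs of \alg with \TTL at least $\proxy$, each succeeding with probability at least $\pr$, so the probability that \emph{none} of them succeeds is at most $(1-\pr)^{\ceil{1/\pr}} \leq e^{-\pr \ceil{1/\pr}} \leq 1/e$.

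Next, unfolding the recursion, $S_m$ for $m \geq k$ decomposes into $2^{m-k}$ disjoint copies of $S_k$ (with the remaining entries having strictly larger \TTL{}s). Because the underlying runs are mutually independent, the probability that no successful run has occurred by the end of $S_m$ is at most $(1/e)^{2^{m-k}}$, which decays doubly-exponentially in $m-k$. Letting $N$ denote the position in $T$ of the first successful run, the total simulation cost is bounded by $\sum_{m \geq 0} C_m \Prob{N > |S_{m-1}|}$ (with $|S_{-1}|=0$). The contribution from $m \leq k$ is at most $\sum_{m=0}^k (m+1) 2^m = O(k 2^k) = O(\Opt \log \Opt)$. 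The tail contribution from $m > k$ becomes, after substituting $j = m - k$, $2^k \sum_{j \geq 1}(k+j+1) 2^j (1/e)^{2^{j-1}}$; the doubly-exponential factor lets the series converge to $O(k)$, yielding $O(2^k \log \Opt) = O(\Opt \log \Opt)$.

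The main obstacle I anticipate is verifying the two structural claims rigorously: (i) that the $j$-th largest entry of $S_k$ is indeed $\geq 2^k / j$ (the ``front dominates the hyperbola'' property), after dealing with the rounding of $1/\pr$ and $\proxy$ to the scale of powers of two; and (ii) that the copies of $S_k$ embedded inside $S_m$ are genuinely disjoint, so the corresponding runs of \alg are independent and the doubly-exponential tail is justified. Once these facts are established, the rest reduces to the routine tail-sum calculation sketched above.
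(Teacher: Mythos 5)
Your proof is correct, and it takes a genuinely different route from the paper's. The paper argues directly in expectation: it lets $B \sim \mathrm{Geom}(\beta)$ be the number of elements of $T$ with value at least $\Delta$ consumed before success, bounds the largest power of two used by $\ell = 1 + \ceil{\log_2 (B\Delta)}$, uses the structural fact that the total mass contributed by each power of two in a prefix of $T$ is within a factor of $2$ of any other, and concludes that the work is $O(\ell B \Delta)$, whose expectation is $O(\Opt \log \Opt)$. Your argument instead decomposes $T$ into nested blocks via $S_{m} = S_{m-1}, S_{m-1}, 2^{m}$, shows each complete pass of $S_k$ (with $2^k = \Theta(\Opt)$) succeeds with probability at least $1 - 1/e$ because its $j$-th largest entry dominates $2^k/j$, and then sums block costs against a doubly-exponentially decaying failure probability. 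What your approach buys: it sidesteps the correlation between the random quantities $\ell$ and $B$ (the paper's final step implicitly bounds $\Ex{\ell B}$ by $\Ex{\ell}\Ex{B}$ without justification, whereas your phase argument never needs to multiply dependent random variables), and it yields as a byproduct a strong concentration bound -- the probability the simulation exceeds $c\, 2^{j}\, \Opt \log \Opt$ decays doubly exponentially in $j$. What the paper's approach buys is brevity and a reusable structural observation ($S(i,k) \leq S(j,k) \leq 2 S(i,k)$) that makes the ``front sweeps the hyperbola'' intuition explicit. Both of your anticipated obstacles are in fact unproblematic: (i) the count of entries of $S_k$ that are $\geq 2^{k-i}$ is exactly $2^{i+1}-1$, which gives the $2^k/j$ bound with no further rounding issues, and (ii) the copies of $S_k$ inside $S_m$ are disjoint as sub-multisets and every element of $T$ triggers a fresh, independently randomized run of \alg, so the independence you need is immediate.
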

\begin{proof}
    Let $S(i,k)$ be the sum of all the appearances of $2^i$ in first
    $k$ elements of $T_k$. For $2^i > 2^j$ (both appearing in the
    first $k$ elements), we have $S(i,k) \leq S(j,k) \leq 2S(i,k)$, as
    one can easily verify.

    Let $(1/\beta, \Delta)$ be the profile of $\alg$, see
    \defref{profile}.  The optimal simulation takes in expectation
    $\Opt= \Theta(\Delta/\beta)$ time.  Let $k$ be the number of
    elements of $T$ that were used by the simulation before it
    stopped.  Let $B$ be the number of elements in $t_1, \ldots, t_k$ that are
    at least as large as $\Delta$.  Clearly,
    $B \sim \mathrm{Geom}(\beta)$ (to be more precise it is
    stochastically dominated by it). Thus, we have
    $\Ex{B} \leq 1/\beta$. Furthermore, the quantity
    $\ell = 1 +\ceil{\log_2 (B \Delta)}$ is an upper bound on the
    largest index $\ell$ such that $2^\ell$ appears in
    $t_1, \ldots, t_k$.  An easy calculation shows that
    $\Ex{\ell} =O(1 + \log (\Delta/\beta) ) = O( \log \EWX{\alg})$.
    The total work of the simulation thus is
    $\sum_{i=0}^\ell S(i,k) = O( \ell B \Delta)$, which in expectation
    is $O( \EWX{\alg} \log \EWX{\alg} ) = O( \Opt \log \Opt)$.
\end{proof}

\section{New strategies}
\seclab{new}

\subsection{The random search}
\seclab{random}

A natural approach is to generate the sequence of \TTL{}s using some
prespecified distribution, repeatedly picking a value $t_i$ before
starting the $i$\th run. This generates a strategy that can easily be
implemented in a distributed parallel fashion without synchronization,
with the additional benefit of being shockingly simple to describe.

\subsubsection{The Zeta $2$ distribution}
\seclab{r:zeta:2}

\begin{lemma}
    \lemlab{zeta:2:h:p}%
    With probability $\geq 1 -1 /\Opt^{O(1)}$, the expected running
    time of a simulation of $\alg$, using a \TTL sequence sampled from
    the $\zeta(2)$ distribution of \exmref{zeta:2}, is
    $O( \Opt \log \Opt)$, where $\Opt$ is the minimal expected running
    time of any simulation of \alg.
\end{lemma}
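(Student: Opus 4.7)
The plan is to compare the $\zeta(2)$ simulator to a truncated variant whose \TTL{}s are capped at some threshold $T = \Opt^{c_0}$ (for a constant $c_0$ to be chosen). The truncated variant has bounded per-iteration cost and admits a clean Wald-identity analysis, while the two simulators agree with high probability over the \TTL sampling.

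To set up: by \lemref{equiv}, the profile $(1/\pr, \proxy) = \profileX{\alg}$ satisfies $\Opt = \Theta(\proxy/\pr)$. Let $t_1, t_2, \ldots$ be the i.i.d.\ \TTL{}s sampled from the $\zeta(2)$ distribution, and set $\tilde t_i = \min(t_i, T)$. Write $\tilde N$ for the first iteration of the truncated simulator at which $X_i \leq \tilde t_i$, and $\tilde W = \sum_{i=1}^{\tilde N} \min(\tilde t_i, X_i)$ for its total cost. The crux is to bound $\Ex{\tilde W}$ via Wald's identity. The per-iteration expected cost satisfies $\Ex{\min(\tilde t, X)} \leq \Ex{\tilde t} = O(\log T) = O(\log \Opt)$, since the heavy $\zeta(2)$ tail past $T$ contributes only a constant while the mass below $T$ gives $\sum_{s=1}^{T-1} s / s^2 = O(\log T)$. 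For the iteration count, since $T \geq \proxy$ the per-iteration success probability is at least $\Prob{\tilde t \geq \proxy} \cdot \pr = \Theta(1/\proxy) \cdot \pr = \Theta(1/\Opt)$, so $\tilde N$ is stochastically dominated by a geometric with parameter $\Theta(1/\Opt)$ and $\Ex{\tilde N} = O(\Opt)$. Wald's identity applies because $\min(\tilde t_i, X_i)$ is i.i.d.\ bounded by $T$, and $\tilde N$ is a stopping time with finite expectation; this yields $\Ex{\tilde W} = \Ex{\tilde N} \cdot \Ex{\min(\tilde t, X)} = O(\Opt \log \Opt)$.

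Next I would compare the truncated and original simulators. Let $K = C \Opt \log \Opt$ for a large constant $C$, and define the good event $\Event = \{\forall i \leq K : t_i \leq T\} \cap \{\tilde N \leq K\}$. A union bound together with the geometric tail of $\tilde N$ yields $\Prob{\bar{\Event}} \leq K \Prob{t > T} + \Prob{\tilde N > K}$, which evaluates (for $c_0$ and $C$ sufficiently large) to $\Opt^{-\Omega(1)}$. On $\Event$ the truncation is inactive ($\tilde t_i = t_i$ for all $i \leq \tilde N$), so the two simulations execute identically and $W = \tilde W$. Conditioning on $\Event$ gives $\ExCond{W}{\Event} \leq \Ex{\tilde W}/\Prob{\Event} = O(\Opt \log \Opt)$, matching the lemma's claim.

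The main obstacle is the careful application of Wald's identity in the truncated setting: one must verify that $\min(\tilde t_i, X_i)$ is i.i.d.\ with finite mean (which follows from $\tilde t_i \leq T$ and independence of $(t_i, X_i)$ across iterations) and that $\tilde N$ is a stopping time with $\Ex{\tilde N} < \infty$ (from the per-iteration success probability lower bound). A secondary technicality is that on the good event the two simulators actually produce identical executions: a truncated-iteration success ($X_i \leq \tilde t_i$) is automatically an original success ($X_i \leq t_i$ since $\tilde t_i \leq t_i$), and conversely on $\Event$ the truncation is never activated in the first $\tilde N$ iterations, so no success in the original simulator is missed by the truncated one.
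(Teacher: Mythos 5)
Your proposal is correct and follows essentially the same route as the paper: both cut off the polynomially-small heavy tail of the $\zeta(2)$ \TTL{}s via a high-probability good event, bound the per-round expected \TTL by $O(\log \Opt)$, lower-bound the per-round success probability by $\Theta(\pr/\proxy) = \Theta(1/\Opt)$ using the $\Theta(1/t)$ tail of the distribution, and multiply. The only difference is in the bookkeeping, and it is in your favor: your truncation-plus-Wald formulation rigorously justifies multiplying the expected number of rounds by the expected per-round cost, a step the paper carries out informally by summing conditional expectations over a random number of rounds.
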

\begin{proof}
    Let $(1/\beta, \Delta)$ be the profile of \alg, and let
    $L = c \cdot \Opt^{c}$, where $\Opt = \Theta(\Delta/\beta)$ and
    $c$ is some sufficiently large constant, and let
    $\Seq_L = t_1, t_2, \ldots, t_L \sim \zeta(2)$ be the random
    sequence of \TTL{}s used by the simulation.  In the following, let
    $X \sim \zeta(2)$.  Clearly, the probability that any of the
    values in $\Seq_L$ exceed $L^2$ is at most
    $L \Prob{X \geq L^2} = O(1/L)$ since for $i > 1$ we have
    \begin{equation*}
        \Prob{X \geq i}
        =%
        \sum_{\ell=i}^{\infty} \frac{6/\pi^2}{\ell^2}
        \leq%
        \frac{6}{\pi^2}\sum_{\ell=i}^{\infty} \frac{1}{\ell(\ell -1)}
        =%
        \frac{6}{\pi^2}\sum_{\ell=i}^{\infty} \pth{ \frac{1}{\ell-1}
           -\frac{1}{\ell}}
        =
        \frac{6}{\pi^2(i-1)}.
    \end{equation*}
    A similar argument shows that
    \begin{math}
        \Prob{X \geq i} \geq \frac{6}{\pi^2i}.
    \end{math}
    The event that all the values of $\Seq_L$ are smaller than $L^2$
    is denoted by $\Good$.  Observe that
    \begin{equation*}
        \tau  =\ExCond{X }{X \leq L^2}
        =%
        \frac{1}{\Prob{X \leq L^2}} \sum_{i=1}^{L^2}
        \frac{6i}{\pi^2i^2}
        =
        O( \log L)
        =
        O( \log \Opt).
    \end{equation*}
    Since $\Prob{t_i \geq \Delta} \geq 6/(\pi^2 \Delta)$, the
    probability that the algorithm terminates in the $i$\th run (using
    \TTL $t_i$) is at least $ \alpha = 6 \beta/(\pi^2 \Delta)$. It
    follows that the simulation in expectation has to perform
    $M = 1/\alpha = (6/\pi^2)\Opt$ rounds. Thus, the expected running
    time of the simulation is
    $\sum_{i=1}^M \ExCond{t_i}{\Good} = M \tau = O( \Opt \log \Opt)$.
\end{proof}

\subsubsection{Simulating the counter search distribution}
\seclab{r:counter}%

Consider a generated threshold as a binary string
$b_1 b_2 \ldots, b_k$, where the numerical value of this string, is
the value it encodes in base two: $t = \sum_{i=1}^k b_i 2^{k-i}$.

\begin{defn}
    For a natural number $t > 0$, the number of bits in its binary
    representation is its \emphi{length} -- that is,
    $\bitsX{t} = 1+\floor{\log_2 t}$.
\end{defn}

We generate a random string as follows. Let $s_1 =1$. Next, in each
step the string is finalized with probability $1/2$. Otherwise, a
random bit is appended to the binary string, where the bit is randomly
chosen between $0$ or $1$ with equal probability. Let \BIN the
resulting distribution on the natural numbers.  Observe that for
$R \sim \BIN$, the probability that $R$ length is $k$ (i.e.,
$\bitsX{R} = k$) is $1/2^k$, for $k\geq 1$. Also, observe that $R$ has
uniform distribution over all the numbers of the same length.

\newcommand{\RandomCnt}{Random Counter\xspace}

We refer to the random simulation strategy using $\BIN$ to sampling
the \TTL{}s as the \emphi{\RandomCnt} simulation.

\begin{lemma}
    \lemlab{r:counter}%
    With probability $\geq 1 -1 /\Opt^{O(1)}$, the expected running
    time of the \RandomCnt simulation is $O( \Opt \log \Opt)$.
\end{lemma}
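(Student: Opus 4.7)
The plan is to mirror the argument for \lemref{zeta:2:h:p}, with the $\zeta(2)$ distribution replaced by $\BIN$. Let $(1/\beta, \Delta)$ be the profile of \alg, so $\Opt = \Theta(\Delta/\beta)$, and examine the iid prefix $t_1, \ldots, t_L \sim \BIN$ with $L = c\cdot \Opt^{c}$ for a sufficiently large constant $c$. The same three ingredients drive the analysis as in the $\zeta(2)$ proof: the tail of $\BIN$, a truncated conditional mean, and the per-round success probability.

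The key technical step is the tail estimate $\Prob{R \geq s} = \Theta(1/s)$ for $R \sim \BIN$. Writing $d = \bitsX{s}$, the event $\bitsX{R} > d$ alone forces $R \geq s$ and has probability $\sum_{k > d} 1/2^k = 1/2^d \geq 1/(2s)$; including also $\bitsX{R} = d$ gives a matching upper bound $\Prob{R \geq s} \leq \Prob{\bitsX{R} \geq d} = 1/2^{d-1} = O(1/s)$. In particular, $\Prob{R \geq L^2} = O(1/L^2)$, so by a union bound the event $\Good$ that every $t_i$ lies below $L^2$ holds except with probability $O(1/L) = 1/\Opt^{O(1)}$, which supplies the high-probability clause of the lemma.

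With the tail bound in hand, I would evaluate the truncated mean via the layer-cake identity:
\begin{math}
    \Ex{\min\pth{R, L^2}} = \sum_{s=1}^{L^2} \Prob{R \geq s} = O(\log L).
\end{math}
Because $L^2 \cdot \Prob{R > L^2} = O(1)$, this yields $\tau := \ExCond{R}{R \leq L^2} = O(\log L) = O(\log \Opt)$, after normalizing by $\Prob{R \leq L^2} = 1 - O(1/L^2)$. On the other hand, the lower tail bound gives $\Prob{t_i \geq \Delta} \geq 1/(2\Delta)$, so each round succeeds with probability at least $\beta/(2\Delta)$, and the expected number of rounds is $M = O(\Delta/\beta) = O(\Opt)$. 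Conditioning on $\Good$, the expected total running time is $M \cdot \tau = O(\Opt \log \Opt)$, as claimed.

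The main obstacle is really just the tail estimate $\Prob{R \geq s} = \Theta(1/s)$; once that is nailed down, the remainder of the proof proceeds structurally as in \lemref{zeta:2:h:p}. A minor subtlety is that $\BIN$ is supported only on the positive integers rather than on a continuous domain, but since \TTL{}s are used on an integer time grid this causes no issue; an alternative (and equivalent) route to $\tau$ is to decompose by bit length, noting that $\Ex{R \mid \bitsX{R}=k} = \Theta(2^k)$ while $\Prob{\bitsX{R}=k}=1/2^k$, so each of the $O(\log L)$ active bit lengths contributes $\Theta(1)$ to $\tau$.
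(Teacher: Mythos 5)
Your proposal is correct and follows essentially the same route as the paper's proof: the tail bound $\Prob{R \geq t} = \Theta(1/t)$ via bit lengths (the paper's Eq.~(\ref{equation:in:between})), the good event excluding values above $L^2$, a conditional mean of $O(\log \Opt)$ per round, and a geometric number of rounds with success probability $\Omega(\beta/\Delta)$. The only cosmetic difference is that you compute the truncated mean by the layer-cake identity where the paper decomposes by bit length --- and you note that alternative yourself.
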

\begin{proof}
    Let $(1/\beta, \Delta)$ be the profile of \alg,
    $L = c \cdot \Opt^{c}$, where $\Opt = \Delta/\beta$ and $c$ is
    some sufficiently large constant, and let
    $T_L = t_1, t_2, \ldots, t_L$ be the random sequence used by the
    simulation. A value $t_i$ of $T_L$ is \emphi{bad} if
    $t_i \geq L^2$. The probability of $T_L$ containing any bad value
    is at most $L/L^2$, which is polynomially small in $1/\Opt$. From
    this point on, we assume all the values of $T_L$ are good, and let
    $\Good$ denote this event.

    For $R \sim \BIN$, and any integer number $t$, we have
    \begin{equation}
        \frac{2}{t}
        \geq
        \frac{1}{2^{\bitsX{t} -1}}
        =
        \Prob{\bitsX{R} \geq \bitsX{t}}
        \geq
        \Prob{R \geq t}
        \geq
        \Prob{\bitsX{R} > \bitsX{t}}
        =%
        \frac{1}{2^{\bitsX{t}}}
        \geq
        \frac{1}{t}.
        \eqlab{in:between}
    \end{equation}
    Observe that %
    \begin{math}
        \ExCond{R}{\bitsX{R} =j}
        =%
        (2^j + 2^{j+1}-1)/2
        \leq
        3 \cdot 2^{j-1}.
    \end{math}
    Furthermore, $\Prob{R \geq 2^i} = 1/2^{i-1}$, and for $i > j$, we
    have $\ProbCond{\bitsX{R} =j }{ R < 2^i} =
    2^{-j}/(1-1/2^{i-1})$. Thus, we have
    \begin{equation*}
        \ExCond{R}{R < 2^i}
        =%
        \sum_{j=1}^{i-1}
        \ExCond{R}{\bitsX{R} =j}
        \ProbCond{\bitsX{R} =j }{ R < 2^i}
        \leq %
        \sum_{j=1}^{i-1}
        \frac{3 \cdot 2^{j-1} }{2^j(1-1/2^{i-1})}
        \leq
        3i.
    \end{equation*}
    In particular, for $i=1,\ldots, L$, we have
    $\Ex{t_i} = O( \log \Opt )$ conditioned on $\Good$.

    A value $t_i$ of $T_L$ is \emphw{final} if $t_i \geq \Delta$, and
    the running of the $i$\th copy of the algorithm with \TTL $t_i$
    succeeded. Let $X_i$ be an indicator variable for $t_i$ being
    final. Observe that
    \begin{equation*}
        p
        =
        \Prob{X_i=1}
        =
        \ProbCond{ \alg \text{ successful run with \TTL ~}t_i }{t_i \geq
           \Delta}
        \Prob{t_i \geq \Delta}
        \geq%
        \beta \cdot \frac{1}{\Delta}
        =
        \frac{1}{\Opt},
    \end{equation*}
    by \Eqref{in:between}.  The number of values of $T_L$ the
    simulation needs to use till it hits a success is a geometric
    variable $\sim \mathrm{Geom}(1/\Opt)$. (Clearly, the probability
    the algorithm reds all the values of $T_L$ is diminishingly small,
    and can be ignored.) Thus, the algorithm in expectation reads
    $ 1/p = O(\Opt)$ values of $T_L$, and each such value has
    expectation $O( \log \Opt)$. Thus, conditioned on $\Good$, the
    expected running of the simulation is $O( \Opt \log \Opt )$.
\end{proof}

\subsection{The wide search}
\seclab{wide:s}

Consider running many copies of the algorithm in parallel, but in
different speeds.  Specifically, the $i$\th copy of \alg, denoted by
$A_i$, is run in speed $\alpha_i=1/i$, for $i=1,2,3,\ldots$. So, consider
the simulation immediately after it ran $A_1$ for $t$ seconds so far
--- here $t$ is the \emphi{current time} of the simulation.  Thus, at
time $t$ the $i$\th algorithm $A_i$ run so far
$n_i(t) = \floor{ t/i}$ seconds. Naturally, the simulation instantiates
an algorithm $A_i$ at time $i$ since $n_i(i) = 1$, and then runs it
for one second. More generally, the simulation allocates a running
time of one second to a copy of the algorithm $A_j$ at time $t$ such that $n_j(t)$ had increased to a new integral value. Then, $A_j$ is
being ran for one second. This simulation scheme can easily be
implemented using a heap, and we skip the low-level details.

It is not hard to simulate different speeds of algorithms in the
pause/resume model by using a scheduler that allocates time slots for
the copies of the algorithm according to their speed.

\bigskip

The wide search starts to be effective as soon as the running time of
the first algorithm exceeds a threshold $\nabla$ that already provides
a significant probability that the original algorithm would stop with
a success, as testified by the following lemma.

\begin{lemma}
    \lemlab{w:s:good:p}%
    The wide search simulation described above, executed on \alg with
    profile $(1/\pr, \Delta)$, has the expected running time
    $O\pth{ \frac{\nabla}{\beta} \log \frac{\nabla}{\beta}} = O( \Opt
    \log \Opt)$, where $\Opt$ is the optimal expected running time of
    any simulation of \alg (here, the $i$\th algorithm is being ran with
    speed $\alpha_i = 1/i$).
\end{lemma}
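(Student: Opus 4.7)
The plan is to bound the ``current simulation time'' $T$ at which the wide search first produces a successful run, and then convert this into a bound on the total work performed by the simulator. First, I would observe that at current time $t$ (the wall-clock time accumulated by $A_1$) each copy $A_i$ has been run for $n_i(t) = \floor{t/i}$ seconds, so the total work the simulator has performed is
\begin{equation*}
    W(t) = \sum_{i=1}^{\floor{t}} \floor{t/i} = \Theta(t \log t)
\end{equation*}
by the standard harmonic-sum estimate.

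Next I would bound the upper tail of $T$. Focus on the copies indexed by $i \in \set{1, 2, \ldots, \floor{t/\Delta}}$: each such copy has already been run for at least $\Delta$ seconds, so by the definition of the profile (\defref{profile}) each independently terminates within the first $\Delta$ seconds of its execution with probability at least $\beta$. Since distinct copies use independent randomness, if none of them has terminated, then the $\floor{t/\Delta}$ independent events ``$X_i > \Delta$'' all hold, and hence
\begin{equation*}
    \Prob{T > t}
    \leq
    (1-\beta)^{\floor{t/\Delta}}.
\end{equation*}
Substituting $t = k \Opt = k \Delta/\beta$ for integer $k \geq 1$ yields $\Prob{T > k \Opt} \leq e^{-k}$, an exponentially decaying tail around $\Opt$.

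To finish, I would combine $W(T) = O(T \log T)$ with this tail bound. A dyadic decomposition of the expectation over $T \in [2^j \Opt, 2^{j+1}\Opt]$ together with a geometric sum gives $\Ex{W(T)} = O(\Opt \log \Opt)$, which matches the claimed $O\pth{(\nabla/\beta) \log (\nabla/\beta)}$ up to constants (with $\nabla = \Theta(\Delta)$).

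The main obstacle is keeping the independence step airtight: the event $\set{T > t}$ should be analyzed by directly bounding the probability that each of the $\floor{t/\Delta}$ fast copies independently has running time exceeding $\Delta$, rather than by conditioning on the behavior of the slow copies with $i > \floor{t/\Delta}$, which would tangle the randomness. A minor subsidiary issue is to verify that $\floor{t/i} \geq \Delta$ whenever $i \leq \floor{t/\Delta}$, which holds cleanly since times are measured in integral seconds. Once these are phrased correctly, the remaining calculations are routine harmonic-sum estimates and a tail integration.
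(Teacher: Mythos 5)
Your proposal is correct and follows essentially the same route as the paper's proof: both bound the total work at current time $t$ by the harmonic sum $\sum_i t/i = O(t\log t)$, observe that the first $\approx t/\Delta$ copies have each run for at least $\Delta$ seconds so the no-success probability decays like $(1-\beta)^{t/\Delta}$, and then sum the work bound against this exponentially decaying tail (the paper parameterizes by $T_j = \nabla\ceil{j/\beta}$ rather than by $t = k\Opt$, which is only a reparameterization). The minor points you flag (integrality, independence of the copies) are handled implicitly in the paper as well.
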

\begin{proof}
    Let $T_j = \nabla\ceil{j /\beta} $. If we run the first algorithm
    for $T_j$ time, then the first $\ceil{j/\beta}$ algorithms in the
    wide search are going each to be run at least
    $T_j / \ceil{j/\beta} = \nabla$ time.
    The probability that all
    these algorithms failed to terminate successfully is at most
    $p_j = (1- \beta)^{\ceil{j/ \beta}} \leq \exp(-j)$.
    The total running time of
    the simulation until this point in time is bounded by
    \begin{equation*}
        S_j = \sum_{i=1}^{T_j} \frac{T_j}{i} = O(T_j \log T_j)
        =%
        O\Bigl( \frac{\nabla j}{\beta} \log \frac{\nabla j}{\beta} \Bigr).
    \end{equation*}
    Thus, the expected running time of the simulation is
    asymptotically bounded by
    \begin{equation*}
        \sum_{j=0}^\infty p_j O(S_{j+1})
        =
        \sum_{j=0}^\infty
        O\pth{ \exp(-j)  \frac{(j+1)\nabla}{\beta}
           \log \frac{(j+1)\nabla}{\beta}
        }
        = %
        O\pth{ \frac{\nabla}{\beta} \log
           \frac{\nabla}{\beta}}
        =%
        O( \Opt \log \Opt ),
    \end{equation*}
    as $\Opt = O( \nabla/\beta)$.
\end{proof}

For additional results on the wide search, see \apndref{more:wide:s}.

\subsection{Caching runs}
\seclab{cache}

A natural approach suggested by the wide simulation, is to run any of the
other simulations based on a sequence of \TTL{}s, but instead of
terminating a running algorithm when its \TTL is met,
we suspended it instead. Now, whenever the algorithm needs to be run for a
certain time $t_i$, we first check if there is a suspended
run that can be resumed to achieve the desired threshold. To avoid
spamming the system with suspended runs (as the wide search does), we
limit the cache of suspended jobs, and kill jobs if there is no space
for them in the cache.

This recycling loses some randomness that the original scheme has, but
it has the advantage of being faster, as long runs of \alg are not
started from scratch.

\section{Conclusions}

We revisited a paper of Luby \etal \cite{lsz-oslva-93} studying the
problem of how to simulate a Las Vegas algorithm optimally. We
described in details their results, and presented several new
strategies and proved that they are also optimal. In a companion
paper, we carried out extensive testing of these various
strategies. This problem is quite interesting from both theoretical
and practical points of view, and it surprising that it is not better
known in the randomized algorithms literature/community

\paragraph{The benefit of parallelism.}

A major consideration in practice that one can use many
threads/processes in parallel (i.e., we used a computer with 128
threads in our experimental study.

\BibTexMode{%
   \bibliographystyle{alpha}%
   \bibliography{shortcuts,theory}%
}%
\BibLatexMode{\printbibliography}

\appendix

\section{More on the wide search}
\apndlab{more:wide:s}

Here, we present some additional results on the wide search, in
particular showing how it leads to an optimal strategy in the
stop/restart model (i.e., similar to the counter search). We show that
in some cases one can get better expected running time using a
different strategy by choosing different ``speeds'' for the
simulation.

\subsection{Simulation in the restart model}

One can convert a wide-search simulation to a stop/restart simulation.
Indeed, consider the pause/resume wide-search simulation -- and
consider how it handles the $i$\th copy of the algorithm $A_i$. Every
once in a while, the simulator resumes this algorithm and runs its for
one second. Assume that the simulator already ran algorithm $A_i$
for $t$ seconds, and wants to resume it. The simulator can simply
start from scratch and run a new copy of the algorithm for $2t$
seconds, where the new copy replaces $A_i$, and it is dormant till the
simulation tries to run $A_i$ pass $2t$ seconds. (This is the standard
idea used in implementing $O(1)$ time insertion into an array.)  Since
the runs performed for a single algorithm $A_i$ are a geometric
series, the running time is dominated by the last execution for it,
which implies that the total running time of the simulation
is proportional to its overall running time. We thus get the following
result.

\begin{theorem}
    Let \alg be a randomized algorithm, and assume that an optimal
    simulation for \alg is able to run \alg successfully in expected
    $\Opt$ time. Then, the simulation described above, in the stop and
    restart model, takes $O( \Opt \log \Opt )$ expected time.
\end{theorem}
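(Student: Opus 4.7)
The plan is to mirror the analysis of \lemref{w:s:good:p}, replacing the pause/resume accounting with the doubling accounting prescribed just above the theorem.

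First I would record the basic doubling structure: for each virtual algorithm $A_i$, the stop/restart simulator performs a sequence of independent fresh runs of lengths $1, 2, 4, \ldots, 2^k, \ldots$, so the total simulator work spent on $A_i$ by the time the fresh run of length $2^k$ has completed is $\sum_{\ell=0}^{k} 2^\ell \leq 2 \cdot 2^k$. Hence the per-$A_i$ work is at most a constant factor larger than in the wide search, where $A_i$ would instead have accumulated exactly $2^k$ seconds of pause/resume compute. Summing over the $A_i$'s, the total simulator work of the stop/restart scheme up to any analogous moment is at most a constant factor more than that of the wide search.

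Next I would rerun the proof of \lemref{w:s:good:p} in the stop/restart setting. Let $(1/\beta, \nabla)$ be the profile of $\alg$ and let $k^\star$ be the smallest integer with $2^{k^\star} \geq \nabla$, so $2^{k^\star} < 2\nabla$. Any fresh run of length $2^{k^\star}$ succeeds independently with probability at least $\beta$. Define $\tilde T_j$ to be the total simulator work by which each of $A_1, \ldots, A_{\ceil{j/\beta}}$ has completed the fresh run of length $2^{k^\star}$. By the previous paragraph, combined with the harmonic sum $\sum_{i=1}^{m} m/i = O(m \log m)$ used in \lemref{w:s:good:p}, one obtains
\begin{equation*}
    \tilde T_j = O\pth{ (j\nabla/\beta)\, \log (j\nabla/\beta) }.
\end{equation*}
Since the fresh runs across $A_1, \ldots, A_{\ceil{j/\beta}}$ use independent randomness, the probability that none of them succeeds is at most $(1-\beta)^{\ceil{j/\beta}} \leq \exp(-j)$. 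Summing $\sum_j \exp(-j)\, \tilde T_{j+1}$ exactly as in \lemref{w:s:good:p} then yields the desired bound of $O(\Opt \log \Opt)$, since $\Opt = \Theta(\nabla/\beta)$ by \lemref{equiv}.

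The main obstacle I expect is making the scheduling bookkeeping precise. A fresh copy of length $2^k$ is executed atomically, consuming $2^k$ simulator seconds in one go, so the granularity of the stop/restart schedule does not line up with that of the wide search, which hands out single-second work units. I would resolve this by specifying an explicit policy: traverse the wide-search round-robin schedule, and for each ``advance $A_i$ by one internal second'' event, either use the currently alive fresh copy of $A_i$ (if it has not been exhausted) or, if that copy is exhausted, start a new doubled fresh copy of length $2 \cdot 2^k$ and run it to completion. Under this policy the per-$A_i$ geometric-series bound holds, and the remainder is structurally identical to the proof of \lemref{w:s:good:p}. A minor subtlety is that the stop/restart scheme uses fresh randomness at every restart, so its success events differ from the wide search's; but this can only help, as each $A_i$ now receives multiple independent chances of success within its allotted work.
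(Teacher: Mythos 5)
Your proposal is correct and follows essentially the same route as the paper: the doubling/geometric-series accounting shows that the stop-and-restart scheme does at most a constant factor more work per copy $A_i$ than the pause/resume wide search, and the $O(\Opt\log\Opt)$ bound then follows by rerunning the analysis of \lemref{w:s:good:p}. You are in fact somewhat more careful than the paper's own terse argument, notably in redoing the success-probability calculation to account for each restarted copy using fresh randomness rather than invoking the wide-search lemma as a black box.
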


\subsection{Improving the expected running time in some cases}

Here, we show that in some cases one can do better than
$O( \Opt \log \Opt)$ expected running time.

\begin{assumption}
    One can choose other sequences of $\alpha$s (i.e., specify
    different speeds for the algorithms being simulated).  The natural
    requirement is that the sequence of $\alpha$s is monotonically
    decreasing, and furthermore, for any $i \in \NN$, we have that
    $\alpha_{2i} \geq c \alpha_i$, for some absolute constant
    $i$. This implies that the $\alpha$s decrease polynomially with
    $i$ (not exponentially). Specifically, there exists some constant
    $d$, such that $\alpha_i = \Omega( 1/i^d)$.
\end{assumption}

Given such a sequence of $\alpha$s, let
$\invX{t} = \min \Set{i \in \NN}{ 1/\alpha_i > t}$. The total work
done by the simulation by time $t$ is
\begin{equation*}
    W(t)
    =
    \sum_{i=1}^{\invX{t}} \floor{t\alpha_i}.
\end{equation*}
We emphasize that if any of these algorithms succeeds, then the
simulation stops\footnote{And declares victory, hmm, success.}.

The first time when the simulation ran at least $k$ algorithms for $t$
time, is when $n_k(\tau) = t$, which happens when $\tau = t/\alpha_k$
(as $A_1, \ldots, A_{k-1}$ all ran for longer by this time). The total
work done by the simulation by this time is $W(\tau) = W(t/\alpha_k)$.

\begin{lemma}
    \lemlab{expected}%
    Let $\mu$ be the expected running time of \alg. Then, the expected
    running time of the simulation is $O(W(\mu))$, and furthermore,
    for $\psi \geq 1$, the probability that the simulation runs for
    more than $\Omega(W(2 \mu/\alpha_\psi) )$ time, is at most
    $2^{-\psi}$.
\end{lemma}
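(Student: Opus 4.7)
The plan is to combine a Markov-style independent failure argument across the first few copies with integration of the resulting tail bound, exactly in the spirit of \lemref{w:s:good:p} but generalized to arbitrary speed sequences.

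First I would prove the tail bound, which is the heart of the lemma. By Markov's inequality, a single independent execution of \alg succeeds within $2\mu$ seconds with probability at least $1/2$. Now consider the wall-clock moment $\tau_\psi = 2\mu/\alpha_\psi$, i.e., when $A_1$ has been run for $2\mu/\alpha_\psi$ seconds. Since the $\alpha_i$ are monotonically decreasing, every one of the copies $A_1,\ldots,A_\psi$ has been allocated at least $\floor{\tau_\psi \alpha_\psi} \geq 2\mu$ seconds of its own runtime by this point. Because these copies use independent randomness, the probability that none of them has succeeded by time $\tau_\psi$ is at most $(1/2)^{\psi} = 2^{-\psi}$. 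The total work expended by the simulation up to wall-clock moment $\tau_\psi$ is, by the very definition of $W$, equal to $W(\tau_\psi) = W(2\mu/\alpha_\psi)$. This yields the stated tail bound $\Prob{S > W(2\mu/\alpha_\psi)} \leq 2^{-\psi}$, where $S$ denotes the total running time of the simulation.

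For the expectation estimate I would integrate the survival function:
\begin{equation*}
    \Ex{S} = \int_0^\infty \Prob{S > s}\, ds
    \leq W(2\mu/\alpha_1) + \sum_{\psi=1}^\infty 2^{-\psi}\bigl(W(2\mu/\alpha_{\psi+1}) - W(2\mu/\alpha_\psi)\bigr),
\end{equation*}
where on each interval $\bigl[W(2\mu/\alpha_\psi),\, W(2\mu/\alpha_{\psi+1})\bigr]$ we use the tail bound just proved. Dropping the subtraction only makes this larger, so it suffices to show $\sum_{\psi \geq 1} 2^{-\psi} W(2\mu/\alpha_{\psi+1}) = O(W(\mu))$. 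The leading term $W(2\mu/\alpha_1)$ is $O(W(\mu))$ since $\alpha_1 = \Theta(1)$ by assumption.

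The main obstacle is bounding $W(2\mu/\alpha_{\psi+1})$ in terms of $W(\mu)$ and $\psi$. Here is where the polynomial-decay assumption on the speeds is essential: from $\alpha_{2i} \geq c\alpha_i$ one gets $\alpha_\psi = \Omega(1/\psi^d)$, so $1/\alpha_{\psi+1} = O(\psi^d)$. Combined with the elementary estimate $W(Ct) \leq C \cdot \invX{Ct} \cdot t = \mathrm{poly}(C) \cdot W(t)$ (obtained by breaking the defining sum of $W$ at the index $\invX{t}$ and using monotonicity of $\alpha_i$), this gives $W(2\mu/\alpha_{\psi+1}) = \mathrm{poly}(\psi) \cdot W(\mu)$. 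Since $\sum_\psi 2^{-\psi}\mathrm{poly}(\psi)$ converges to an absolute constant, we conclude $\Ex{S} = O(W(\mu))$, as desired.
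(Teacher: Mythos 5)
Your proof has the same skeleton as the paper's: by Markov's inequality each of the first $\psi$ copies independently survives $2\mu$ seconds of its own runtime with probability at most $1/2$; by monotonicity of the speeds all of them have received that much by wall-clock time $\tau_\psi=2\mu/\alpha_\psi$, at which point the total work expended is exactly $W(\tau_\psi)$; hence the $2^{-\psi}$ tail. Summing (or integrating) that tail to get $\Ex{S}\le W(2\mu/\alpha_1)+\sum_{\psi\ge1}2^{-\psi}W(2\mu/\alpha_{\psi+1})$ is likewise what the paper does, written as $\sum_i p_{i-1}W(2\mu/\alpha_i)$ with $p_i\le2^{-i}$. Up to here you are fine.

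The gap is the ``elementary estimate'' $W(Ct)\le C\cdot\invX{Ct}\cdot t=\mathrm{poly}(C)\cdot W(t)$. The inequality $W(Ct)\le Ct\cdot\invX{Ct}$ holds but is far too lossy, and the asserted equality with $\mathrm{poly}(C)\cdot W(t)$ is simply false: already for the canonical speeds $\alpha_i=1/i$ one has $\invX{Ct}\cdot Ct=\Theta(C^2t^2)$ while $W(t)=\Theta(t\log t)$, so no $\mathrm{poly}(C)$ factor bridges the two for large $t$. As written, the expectation bound therefore does not close. What you actually need is the inequality $W(2\mu/\alpha_{\psi+1})\le\mathrm{poly}(\psi)\cdot W(\mu)$ itself, and to get it you must keep the speeds inside the sum rather than bounding each term by $Ct$: write $W(Ct)\le Ct\sum_{\ell\le\invX{Ct}}\alpha_\ell$, compare with $W(t)\ge\tfrac{t}{2}\sum_{\ell<\invX{t}}\alpha_\ell$, and use the assumption $\alpha_{2i}\ge c\,\alpha_i$ (equivalently $1/\alpha_i=O(i^d)$, so $C=2/\alpha_{\psi+1}=O(\psi^d)$) to control both the ratio $\invX{Ct}/\invX{t}$ and the growth of the partial sums of the $\alpha_\ell$. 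This is precisely the point where the paper's proof substitutes $1/\alpha_i\le i^d/c$ to turn $W(2\mu/\alpha_i)$ into a $\mathrm{poly}(i)\cdot\mu$ term that the $2^{-i}$ factor absorbs; your argument needs the analogous substitution, not the $\invX{Ct}\cdot t$ shortcut.
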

\begin{proof}
    The time when the simulation run (at least) $i$ algorithms (each)
    for more than $2 \mu$ time, is $\tau_i = 2\mu /\alpha_i$.  By
    Markov's inequality, the probability of $A_j$ to run for more than
    $\alpha_j \tau_i = 2\mu(\alpha_j/\alpha_i)$ time, before stopping
    is at most
    \begin{equation*}
        \frac{\mu}{2\mu(\alpha_j/\alpha_i)}
        =%
        \frac{ \alpha_i }{2 \alpha_j} \leq \frac{1}{2}.
    \end{equation*}
    Thus, the probability the simulation did not stop by time $\tau_i$
    is at most
    \begin{equation*}
        p_i
        =%
        \prod_{k=1}^i \frac{ \alpha_i }{2 \alpha_k}
        \leq
        \frac{1}{2^i}.
    \end{equation*}
    Thus, since $\alpha_i \geq c/i^d$, for some constants $c>0$ and
    $d>1$, the expected running time of the simulation is
    asymptotically bounded by
    \begin{equation*}
        \sum_{i=1}^\infty p_{i-1} W(2\mu/\alpha_i)
        \leq%
        \sum_{i=1}^\infty \frac{1}{2^i}
        \sum_{\ell=1}^{2 \mu /\alpha_i} \alpha_\ell \frac{2 \mu}{\alpha_i}
        \leq%
        \sum_{i=1}^\infty \frac{1}{2^i}
        \sum_{\ell=1}^{2 \mu /\alpha_i}  \frac{2 \mu i^d }{c}
        =
        \sum_{i=1}^\infty O( \mu i^d/2^i)
        =
        O(\mu).
    \end{equation*}
\end{proof}

Thus, for any ``reasonable'' polynomially decaying sequence
$\alpha_1 \geq \alpha_2 \geq \cdots$, the expected running time of the
resulting simulation is
$O( \mu)$. By choosing the sequence more carefully, one can make the
tail probability smaller.

\begin{lemma}
    \lemlab{wide:search:i}%
    Consider an instantiation of the above simulation, where the
    $i$\th algorithm runs at speed $\alpha_i =1/(i \log^2 i)$, for
    $i>2$, and $\alpha_1 = 1$, $\alpha_2 = 1/2$.  Then the expected
    running time of the simulation is $O(\mu)$. Furthermore, the
    probability simulation runs for more than $O(u \mu)$ time, for any
    $u \geq 10$, is at most $1/\exp\pth{\Omega( u/ \log^2 u )}$.
\end{lemma}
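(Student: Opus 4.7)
The plan is to invoke \lemref{expected} with the specified sequence $\alpha_i = 1/(i \log^2 i)$ (for $i > 2$), and to analyze the work function $W(t)$ that this choice of speeds induces. The whole proof is essentially a calculation; the reason the lemma works is that $\alpha_i = 1/(i\log^2 i)$ is (up to the log factor) the slowest polynomial decay for which $\sum \alpha_i$ still converges.

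First, I would bound $W(t)$. We have $1/\alpha_i = i \log^2 i$, so $\invX{t} = O(t/\log^2 t)$. The crucial observation is that
\begin{equation*}
    \sum_{i \geq 3} \alpha_i = \sum_{i \geq 3} \frac{1}{i \log^2 i} < \infty
\end{equation*}
by the integral test, since $\int_3^\infty \frac{dx}{x \log^2 x} = 1/\log 3$ is finite. Therefore, for all $t \geq 1$,
\begin{equation*}
    W(t) = \sum_{i=1}^{\invX{t}} \floor{t \alpha_i}
    \leq
    t \sum_{i \geq 1} \alpha_i
    = O(t).
\end{equation*}
The expected-running-time claim is then immediate from \lemref{expected}: the expected running time is $O(W(\mu)) = O(\mu)$.

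For the tail bound, \lemref{expected} gives that the probability the simulation runs for more than $\Omega(W(2\mu/\alpha_\psi))$ time is at most $2^{-\psi}$. Using the bound $W(t) = O(t)$, we get $W(2\mu/\alpha_\psi) = O(\mu/\alpha_\psi) = O(\mu \psi \log^2 \psi)$. Given $u \geq 10$, we pick $\psi = \Theta(u/\log^2 u)$; then $\log \psi = \Theta(\log u)$ and so $\psi \log^2 \psi = \Theta(u)$, which means the threshold $\Omega(W(2\mu/\alpha_\psi))$ is at most $O(u\mu)$. The tail bound becomes
\begin{equation*}
    2^{-\psi} = \exp\pth{-\Omega(u/\log^2 u)},
\end{equation*}
matching the claim.

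The only subtlety is the self-consistency computation $\psi \log^2 \psi = \Theta(u) \Longleftrightarrow \psi = \Theta(u/\log^2 u)$; one must verify that $\log \psi$ is indeed $\Theta(\log u)$ under this choice, which requires $u$ to be bounded away from $1$ (hence the hypothesis $u \geq 10$). Beyond this bookkeeping, there is no real obstacle --- the sequence $\alpha_i = 1/(i \log^2 i)$ was engineered precisely so that $\sum \alpha_i$ converges, making $W$ linear, which is what upgrades the generic $O(\mu)$ expectation of \lemref{expected} into a nearly-exponential tail.
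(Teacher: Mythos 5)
Your proposal is correct and follows essentially the same route as the paper's proof: establish $\sum_i \alpha_i = O(1)$ so that $W(t) = O(t)$, apply \lemref{expected} for the expectation, and choose $\psi = \Theta(u/\log^2 u)$ so that $W(2\mu/\alpha_\psi) = O(\mu\,\psi\log^2\psi) = O(u\mu)$ with tail probability $2^{-\psi}$. Your added care about the integral test and the self-consistency of $\psi\log^2\psi = \Theta(u)$ is just a more explicit version of what the paper leaves as "easy to verify."
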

\begin{proof}
    It is easy to verify that $\sum_{i=1}^\infty \alpha_i = O(1)$, and
    thus, for any $x > 1$, we have $W(x) = O(x)$.  Setting
    $\psi = u/ \log^2 u$, we have by \lemref{expected}, that the
    algorithm takes more than
    $\Omega(W(2 \mu/\alpha_\psi) ) = \Omega( u \mu )$ is at most
    $1/2^{\psi}$, which implies the claim.
\end{proof}

\section{Proofs}

\subsection{Proof of \lemref{equiv}}
\apndlab{proof:equiv}

\begin{proof}
    Let $X$ be the running time of \alg.  The proof in mathematically
    neat, but the reader might benefit from skipping it on a first
    read.

    Using the notations of \lemref{full:k}, let
    $\Delta = \fTTLX{\alg}$, and let $O( f(\Delta) )$ be the expected
    running time of the optimal simulation of $\alg$. Let
    $\beta = \Prob{X \leq \Delta}$.  \lemref{full:k} implies that the
    expected running time of the \TTL simulation of \alg with \TTL
    $\Delta$ is
    \begin{equation*}
        f(\Delta)
        =%
        \ExCond{X}{X \leq \Delta} + \frac{1- \beta}{\beta} \Delta
        \leq
        \frac{\Delta}{\beta},
    \end{equation*}
    and $f(\Delta) \leq f(\proxy) \leq R(\proxy) = \proxy/\pr$. So
    $O(R(\proxy))$ is an upper bound on the expected running time of
    the optimal simulation of \alg.

    If $\beta \leq \tfrac78$, then
    $f(\Delta) \geq \tfrac18 \Delta/\beta = \tfrac18 R(\Delta) \geq
    \tfrac18 R(\proxy) = \Omega(\proxy/\pr)$.

    So assume $\beta > \tfrac78$. Let
    $\medianC = \MedCond{X }{ X \leq \Delta}$ and
    $\xi = \ExCond{X}{X \leq \Delta} $.  By Markov's inequality, we
    have that $\medianC \leq 2 \xi$. Namely,
    $f(\Delta) \geq \medianC/2$.  Observe that
    \begin{equation*}
        \Prob{ X \leq \medianC }
        =
        \Prob{ (X \leq \medianC) \cap (X \leq \Delta)  }
        =
        \ProbCond{ X \leq \medianC}{X \leq \Delta  } \Prob{X \leq \Delta}
        \geq
        \frac{ \beta}{2}
        >
        \frac{7}{16}.
    \end{equation*}
    Thus, we have
    \begin{math}
        R(\proxy)%
        \leq%
        R(\medianC)%
        =%
        \frac{\medianC}{\Prob{X \leq \medianC}}%
        \leq%
        \frac{16}{7}\medianC%
        \leq%
        5 f(\Delta).
    \end{math}
\end{proof}

\subsection{Proof of \lemref{exp:search}}
\apndlab{proof:exp:search}

\begin{proof}
    (A) Let $r_i = (1+\delta)^{i-1}s$, for $i\geq 1$. Let $\alpha$ be
    the first index such that $r_\alpha \geq \medianC$, and observe
    that for $i \geq \alpha$, we have that the probability the $i$\th
    run of the algorithm exceeds its time quota, is at most $1/2$ (as
    the running time in this iteration has probability $1/2$ to be
    below the median). For $R$ being the expected running time of this
    simulation, we have
    \begin{equation*}
        \Ex{R}
        \leq
        \sum_{i=1}^{\alpha-1} s(1+\delta)^{i-1}
        +
        \sum_{i=\alpha}^{\infty} \frac{1}{2^{i-\alpha-1}}
        s(1+\delta)^{i-1}
        \leq
        \frac{2 \medianC}{\delta}
        + \frac{2 \medianC}{1- (1+\delta)/2}
        \leq
        8 \medianC + \frac{2\medianC}{\delta}.
    \end{equation*}

    \noindent%
    (B)
    Unfortunately, the exponential search simulation can be
    arbitrarily bad compared to the optimal.  Let $\alg$ be an
    algorithm that with probability $\alpha$ stops after one second,
    and otherwise runs forever, where $\alpha \in (0,1)$ is some
    (small) parameter. The optimal strategy here is the fixed search,
    with $\Delta=1$ (i.e., rerun \alg for one second till success) ---
    and it has expected running time $1/\alpha$. Now, consider the
    $(1,\delta)$-exponential simulation. Since $\alpha$ is not known
    to the simulation, we can set to any value such that (say)
    $\alpha \leq \min( \delta/2, 1/4)$.
    Observe that
    \begin{math}
        (1-\alpha) (1+2\alpha) = 1 +\alpha - 2\alpha^2 \geq 1+
        \alpha/2.
    \end{math}
    Thus, the expected running time of the $(1,\delta)$-exponential
    simulation is
    \begin{equation*}
        \sum_{i=1}^\infty  (1-\alpha)^{i-1} (1+\delta)^{i-1}
        \geq%
        \sum_{i=1}^\infty  \bigl( (1-\alpha) (1+2\alpha)\bigr)^{i-1}
        \geq
        \sum_{i=1}^\infty  \bigl( 1+\alpha/2) = \infty.
    \end{equation*}
    Namely, the optimal simulation has finite expected running time,
    while the exponential simulation running time is unbounded.
\end{proof}

\remove{%
\section{Bla bla}

    \begin{equation*}
        1= \sum_{\ell=1}^{\infty} \frac{6/\pi^2}{\ell^2}
    \end{equation*}
    \begin{equation*}
        \Delta_i = \sum_{\ell=1}^{i} \frac{6/\pi^2}{\ell^2}
    \end{equation*}

    $[0,\Delta_1]: i=1$

    $[\Delta_1,\Delta_2]: i=2$

    $[\Delta_k,\Delta_{k+1}]: i=k+1$%
 }

\end{document}